\documentclass[aps,prx,twocolumn,showpacs]{revtex4-1}
\usepackage{bm}
\usepackage{mathrsfs}
\usepackage{amsmath}
\usepackage{amssymb}
\usepackage{graphicx}
\usepackage{amsfonts}
\usepackage{amsthm}
\usepackage{color}
\usepackage{booktabs}
\usepackage{tabularx}
\usepackage{dcolumn}
\usepackage{txfonts}
\usepackage{braket}
\usepackage[caption=false]{subfig}
\usepackage[T1]{fontenc}
\usepackage{multirow}

\DeclareMathOperator{\polylog}{polylog}

\newtheorem{lemma}{Lemma}

\newtheorem{theorem}{Theorem}

\newtheorem{conjecture}{Conjecture}

\begin{document}

\title{Near-Optimal Mode Scaling for Finite-Dimensional Boson Sampling via Lie-Algebraic Leakage Bounds}
\author{Chon-Fai Kam}
\email{dubussygauss@gmail.com}
\affiliation{Université Paris Cité \& University of Reunion, Paris, France}
\affiliation{Dipartimento di Fisica e Chimica Emilio Segrè, Università degli Studi di Palermo, Via Archirafi 36, I-90123 Palermo, Italy}

\author{En-Jui Kuo}
\email{kuoenjui@nycu.edu.tw}
\affiliation{Department of Electrophysics, National Yang Ming Chiao Tung University, Hsinchu, Taiwan, R.O.C.}
\affiliation{Center for Theoretical and Computational Physics, National Yang Ming Chiao Tung University, Hsinchu 300093, Taiwan}

\begin{abstract}
Boson sampling demonstrates quantum advantage through the interference of
indistinguishable particles, with output probabilities governed by matrix
permanents. Realizing it on deterministic, matter-based platforms---cold atoms,
trapped ions, superconducting cavities---requires encoding the bosonic modes in
finite-dimensional local Hilbert spaces, which introduces a leakage channel
absent in linear optics: multi-particle bunching beyond the local truncation
$d$. We develop a unified framework for non-interacting sampling on the
irreducible representations of compact Lie groups, in which the transition
amplitude is the immanant of a submatrix of the single-particle transition
matrix, recovering the permanent for the fully symmetric (bosonic) case. Within
this framework we bound the bunching leakage through a Dyson-series analysis:
decomposing the correlated many-body leakage operator into independent random
matrices and applying non-commutative concentration inequalities, we prove, in a
Gaussian model of the transition matrix, that its spectral norm concentrates at
$\tilde{O}(\sqrt{n})$ rather than the $O(n)$ worst-case estimate of prior
spin-based emulations; the passage to the physical Haar ensemble is reduced to a
single submatrix-comparison input, which we verify at leading order. Exact
numerics across local dimensions $d=2$--$5$ corroborate the concentration and
indicate that the bound is tight, the Haar-ensemble norm matching the closed
form $\sqrt{d(n-d+1)}$ to sub-percent accuracy. This tightens the required mode
number from
$m=\Omega(n^4)$
to the near-optimal $m=\tilde{\Omega}\!\left(n^{1+2/(d-1)}\right)$; for a spin-1
local representation ($d=3$) the overhead falls to $m=\tilde{\Omega}(n^2)$,
matching the collision-free threshold. The result is independent of particle
statistics and applies across finite-dimensional Lie-symmetric architectures,
quantifying the spatial resources needed to preserve sampling hardness on
deterministic hardware.
\end{abstract}

\maketitle

\section{Introduction}

Quantum sampling protocols have become a standard route to demonstrating
quantum computational advantage: they define tasks whose output distributions
are believed to resist efficient classical simulation, without requiring a
universal quantum computer~\cite{harrow2017quantum}. The paradigm is boson
sampling~\cite{Aaronson2011}, in which indistinguishable photons interfere in a
passive linear-optical network and the output probabilities are given by
permanents of submatrices of the network's transition matrix. Photonic
implementations of its Gaussian variant~\cite{hamilton2017gaussian} have reached
the quantum-advantage regime~\cite{zhong2020quantum,madsen2022quantum}. Computing the
permanent is \#P-hard~\cite{Valiant1979}, and under standard complexity-theoretic
conjectures this hardness extends to approximate sampling, placing an efficient
classical simulator in conflict with the non-collapse of the polynomial
hierarchy~\cite{aaronson2016bosonsampling,bouland2019complexity}.

The hardness argument is robust, but the photonic realization is not. Two
hardware bottlenecks limit its scalability: single photons must be generated
probabilistically, and the survival probability of an $n$-photon state decays
with the depth of the interferometer, so optical loss grows with photon
number~\cite{aaronson2016bosonsampling,garcia2019simulating}. Sufficiently lossy
instances are moreover efficiently simulable classically~\cite{renema2018efficient,oh2024classical,oszmaniec2018classical,clifford2018classical}. These are
limitations of the platform rather than of the complexity argument, and they
have motivated a shift toward deterministic, matter-based systems---cold atoms in
optical lattices, trapped ions, and superconducting resonator arrays---where
multi-particle states are prepared with high fidelity and particle number is
fixed by construction.

The shift comes at a structural cost. A photonic mode carries an unbounded local
occupation number, so each mode is an infinite-dimensional Fock space; a
matter-based mode instead carries a finite-dimensional local Hilbert space, set
by the dimension $d$ of its local representation. This truncation introduces a
failure mode absent in linear optics. When more than $d-1$ excitations bunch into
a single site, the dynamics leaves the physical subspace, and the correspondence
to the permanent breaks. Suppressing this bunching leakage requires diluting the
excitations across many modes, which turns the central practical question into
one of resource scaling: how many modes $m$ are needed, as a function of the
particle number $n$ and the local dimension $d$, to keep the leakage negligible
and the sampling distribution close to the ideal?

For the simplest case, spin-$1/2$ ($d=2$),
Peropadre et al.~\cite{peropadre2017equivalence} mapped boson sampling onto the
short-time evolution of a long-range $XY$ spin model and bounded the leakage
through a worst-case estimate of the relevant operator norm,
$\|QH_{\mathrm{Lie}}P\|\le O(n)$, giving $m=\Omega(n^4)$. They conjectured, on
the basis of small-size numerics ($n\le 6$), that for a Haar-random transition
matrix this norm should concentrate at $O(\sqrt{n})$, which would tighten the
requirement to $m\sim n^3$ and match the high-mode regime of photonic boson
sampling; a proof, however, was left open, and the conjecture has stood unproven
since. The obstruction is structural: the
many-body leakage operator has exponentially many entries, but they are
determined by only $O(m^2)$ single-particle parameters and are therefore
strongly correlated, so a naive random-matrix treatment that assumes independent
entries does not apply.

In this work we substantially advance this conjecture---proving it within a
Gaussian model of the transition matrix and reducing the Haar case to a single
submatrix-comparison input---and extend the framework to finite-dimensional
representations of arbitrary compact Lie groups. We first place boson, fermion,
and spin samplers in a single framework: non-interacting evolution on an
irreducible representation, with the transition amplitude given by the immanant~\cite{burgisser2000immanants}
of a submatrix of the single-particle matrix, of which the permanent (fully
symmetric, bosonic) is the \#P-hard case~\cite{Valiant1979}. This unifies the
spin-based~\cite{peropadre2017equivalence}, generalized-boson
\cite{kuo2022boson,kam2025beyond}, and fermionic~\cite{oszmaniec2022fermion} constructions as special cases of one
representation-theoretic structure. We then bound the bunching leakage through a
Dyson-series analysis. The key step is to decompose the correlated many-body
leakage operator into a sum of deterministic combinatorial operators weighted by
independent random variables, which makes non-commutative matrix concentration
inequalities~\cite{tropp2015introduction,tropp2012user} applicable; the residual deviation
between the Haar-random ensemble and its Gaussian approximation we reduce to a
single structural input, isolated below as Conjecture~\ref{conj:transfer} and
verified at leading order. This yields a concentration
of the leakage operator norm at $\tilde{O}(\sqrt{n})$\footnote{Throughout, $\tilde{O}$ and $\tilde{\Omega}$ suppress polylogarithmic factors: $f=\tilde{O}(g)$ means $f=O(g\,\polylog(n,m))$, with $\tilde{\Omega}$ defined dually.} in the Gaussian model,
replacing the $O(n)$ worst-case estimate; the corresponding Haar statement, which
is what Ref.~\cite{peropadre2017equivalence} conjectures, follows conditional on
that comparison input. The resulting mode requirement, stated as Theorem~\ref{thm:main} after the framework is set up, is
\begin{equation}
    m=\tilde{\Omega}\!\left(n^{\,1+\frac{2}{d-1}}\right),
\end{equation}
which for hard-core qubits ($d=2$) gives $m=\tilde{\Omega}(n^3)$, and for a
spin-$1$ local representation ($d=3$) falls to $m=\tilde{\Omega}(n^2)$---the
point at which the physical truncation boundary coincides with the collision-free
threshold, so that the mode budget is near-optimal there: any further reduction
reintroduces the very collisions the dilution is designed to suppress. The bound
is independent of particle statistics---the immanant framework places bosonic,
fermionic, and spin samplers on the same footing---and applies across
finite-dimensional Lie-symmetric architectures.

We emphasize the scope of the claim. We do not establish new hardness: the
\#P-hardness of the permanent and the average-case conjectures it relies on are
inherited from boson sampling~\cite{Aaronson2011,aaronson2016bosonsampling,bouland2026linear}. What
we provide is the resource cost of preserving that hardness on deterministic
finite-dimensional hardware---a quantitative mode-scaling law controlled by the
local dimension $d$. We also flag the one analytical gap explicitly: the leakage
bound is rigorous in the Gaussian model, and its transfer to the physical
Haar ensemble rests on a comparison step (Conjecture~\ref{conj:transfer}) that we
verify at leading order but do not establish in general for the structured,
deterministically sparse operator at hand. This evidence is strictly stronger than
that available for the original conjecture, which rested on small-size numerics
($n\le 6$): a complete proof in the Gaussian ensemble together with a
leading-order verification of the Haar correction, rather than numerical support
alone.

Finally, this scaling law presupposes that the single-particle transition matrix
can be generated in time independent of system size. On platforms restricted to
local interactions, Lieb-Robinson causality forces the scrambling time to grow
with the array diameter~\cite{Richerme2014}, and the extended evolution reopens
the leakage channel the dilution was meant to close. The mode-scaling law is
therefore operative only on architectures with fast, non-local connectivity,
which we use as the criterion for assessing candidate platforms.

The remainder of the paper is organized as follows.
Section~\ref{sec:framework} develops the generalized Lie-algebraic sampling
framework and the immanant structure of the amplitudes.
Section~\ref{sec:mainresult} contains the leakage analysis: the operator
decomposition, the matrix-concentration bound on $\|QH_{\mathrm{Lie}}P\|$, the
Weingarten decoupling, and the resulting mode-scaling law.
Section~\ref{sec:experiment} evaluates physical platforms through the combined
lens of finite-dimensional fidelity and non-local connectivity.
Section~\ref{sec:conclusion} summarizes and discusses verifiability under
residual interactions.

\section{Generalized Lie-Algebraic Sampling Framework}
\label{sec:framework}

The computational hardness of Aaronson--Arkhipov (AA) boson sampling arises from
the interference of non-interacting photons in a linear-optical network. On the
deterministic, matter-based platforms that motivate this work, what changes is
the local state space: a photonic mode carries an unbounded occupation number
and hence an infinite-dimensional Fock space, whereas a matter-based mode
carries a finite-dimensional local Hilbert space, set by the dimension $d$ of
its local representation. This section develops the framework that connects such
finite-dimensional systems to the permanent, and identifies when the resulting
sampling problem inherits the hardness of AA.

The finite local dimension is not an abstraction; it is produced by strong
interactions. Consider bosons on $m$ sites with a strong on-site repulsion, the
Bose--Hubbard model~\cite{childs2014bose}
\begin{equation}
    H_{\mathrm{BH}} = -J\sum_{\langle i,j\rangle} b_i^\dagger b_j
    + \frac{U}{2}\sum_i n_i(n_i-1),
    \qquad U \gg Jn .
    \label{eq:bose_hubbard}
\end{equation}
The repulsion costs an energy $\sim U$ for each excitation beyond the first on a
site, so when $U$ dominates the kinetic scale the low-energy manifold is the
truncated space in which each site holds at most $d-1$ excitations; states above
this cap sit at energies $\sim U$ and decouple from the slow dynamics. Within the
truncated manifold the bosonic operators no longer generate the full Fock tower
and instead close on a finite-dimensional algebra. In the hard-core limit $d=2$
the constraint $(b_i^\dagger)^2=0$ makes $b_i^\dagger, b_i$ obey the
$\mathfrak{su}(2)$ relations (the Matsubara--Matsuda map); a higher cap $d-1=2S$
gives the spin-$S$ representation, $d=2S+1$, and multi-component atoms with $d$
internal levels realize higher-rank $\mathfrak{su}(d)$
subalgebras~\cite{Zhang2014}. The same truncation arises dynamically in cavity
and circuit QED through the Kerr effect: a mode with a Kerr nonlinearity
$-\tfrac{K}{2}\,a^\dagger a^\dagger a\,a$ shifts the energy of the
$\lvert n\rangle$ Fock state away from its harmonic value, so that for $K$ large
compared to the drive bandwidth the cavity absorbs at most $d-1$ photons---a
photon blockade~\cite{leghtas2015confining,mirrahimi2014dynamically}. The
two-level limit ($d=2$) is the transmon; tuning $K$ or using selective drives
engineers $d=3$ and beyond~\cite{touzard2019gated}.

Either mechanism leaves the same effective dynamics: a network of $m$ sites, each
carrying a $d$-dimensional local space spanned by $\lvert 0\rangle,\dots,
\lvert d{-}1\rangle$, coupled by a hopping term that moves one excitation at a
time and conserves their total number,
\begin{equation}
    H_{\mathrm{Lie}} = \sum_{i,j=1}^{m}
    \left( A_{ji}\, E^{(j)}_{+}E^{(i)}_{-} + \mathrm{H.c.} \right),
    \label{eq:hlie}
\end{equation}
where $E^{(k)}_{\pm}$ are the raising and lowering generators of the local
algebra at site $k$, and the Hermitian matrix $A$ parametrizes the
single-particle transitions. The matrix $A$ plays the role of the unitary in
linear optics: it carries the amplitudes between input and output modes. For the
hardness analysis the relevant object is the single-particle sampling unitary
$U=e^{-iA\tau}$, which must be Haar-distributed on $U(m)$; the corresponding
Hermitian generator $A=i\log U/\tau$ is the physically synthesized coupling
ensemble. The leakage analysis of Section~\ref{sec:mainresult} is formulated for
coupling matrices with independent entries of variance $1/m$---the Gaussian
model of Eq.~\eqref{eq:decomposition} and the entries of a Haar unitary
(Appendix~\ref{app:weingarten})---neither of which is Hermitian. The Hermitian
constraint pairs the entries $A_{vu}$ and $A_{uv}^{*}$, which enter the leakage
operator through the distinct combinatorial operators $M_{vu}$ and $M_{uv}$;
the pairing nonetheless leaves the variance statistic of
Lemmas~\ref{lem:row}--\ref{lem:col} unchanged, the mixed contributions dropping
by the vanishing pseudo-variance $\mathbb E\,Z_{vu}^2=0$ of a circularly-symmetric
Gaussian (Lemma~\ref{lem:hermitian}); the numerical
comparison of Section~\ref{subsec:numerics} confirms that Hermitian and
independent-entry ensembles yield leakage norms that agree within statistical
error at every size tested. We therefore state the bounds for the analytically
convenient independent ensembles.
Equation~\eqref{eq:hlie} is non-interacting in the precise sense that it is
linear in the generators---a single element of the Lie algebra, not a product of
generators---so the evolution it generates stays within one irreducible
representation rather than scattering across the spectrum.

This last property is what lets us read off the structure of the transition
amplitudes. Because $H_{\mathrm{Lie}}$ is one body, the $N$-particle propagator
acts as the $N$-fold tensor power of the single-particle evolution
$U=e^{-i\,\pi_{\mathrm{def}}(H)t}$, the $M\times M$ matrix in the defining
representation ($M$ being the number of modes). Restricting this tensor power
to the physical symmetry sector and projecting onto an input and an output
occupation pattern $\mathbf m,\mathbf n$ gives the amplitude as an invariant
polynomial of the $N\times N$ submatrix $U[\mathbf m,\mathbf n]$ picked out by the
occupied modes,
\begin{equation}
    \mathcal{A}_{\mathbf m\to\mathbf n}
    \;\propto\; \mathrm{Imm}^{\lambda}\!\big(U[\mathbf m,\mathbf n]\big)
    \;=\; \sum_{\sigma\in S_N}\chi^{\lambda}(\sigma)
      \prod_{k=1}^{N} U_{j_k,\,i_{\sigma(k)}} ,
    \label{eq:immanant}
\end{equation}
where $\lambda$ is the Young diagram fixed by the exchange statistics of the
hardware, $\chi^\lambda$ is the corresponding character of the symmetric group
$S_N$, and the polynomial $\mathrm{Imm}^\lambda$ is the matrix \emph{immanant}.
The reduction of the many-body amplitude to Eq.~\eqref{eq:immanant}---the
Schur--Weyl decomposition, the role of the Young idempotent for mixed diagrams,
and the restriction to the unitary family---is carried out in
Appendix~\ref{app:tensor_factorization}; here we need only the resulting
correspondence between the shape of $\lambda$ and the type of the amplitude. A
single-row diagram, $\lambda=(N)$, has $\chi^\lambda\equiv 1$ and gives the
\emph{permanent}---the bosonic, fully symmetric case, and the \#P-hard object of
AA boson sampling~\cite{Valiant1979}. A single-column diagram, $\lambda=(1^N)$,
has $\chi^\lambda=\mathrm{sgn}$ and gives the \emph{determinant}, the fermionic
case, computable in polynomial time. Mixed shapes give intermediate immanants
that interpolate between these extremes. (Here ``interpolate'' refers to the
diagram shape, not to computational complexity, which is not monotone in
$\lambda$~\cite{burgisser2000immanants}.)

The diagram shape alone, however, does not settle the complexity, and it is worth
being precise about what else is required, because the natural intuition---``a
symmetric diagram, therefore hard''---is false. A permanent is \#P-hard only as
its size grows; a permanent of fixed dimension is computed in constant time.
Hardness needs two conditions together: the amplitude must lie on the permanent
side of the immanant easy/hard boundary, \emph{and} the rank $r$ of the algebra
must grow with the particle number $N$, so that the relevant permanent has
diverging dimension. The two are coupled, since in the dilute regime a fully
symmetric representation can grow only by adding modes---that is, by increasing
rank---which is exactly what drives the dimension of the permanent upward. Two
counterexamples fix the boundary. Free fermions have growing rank, yet their
antisymmetric diagram yields a determinant, absorbed by Gaussian elimination and
efficiently computable, so growing rank is not sufficient. A single large
spin-$S$ has a symmetric diagram but fixed rank ($\mathfrak{su}(2)$, $r=1$); its
amplitudes are classically computable $D$-functions whatever the size of the
diagram, so a symmetric diagram is not sufficient either. Only the bosonic
defining representation of $U(M)$, with both a symmetric diagram and a rank that
grows with the number of modes, places the amplitude on the \#P-hard permanent
and recovers AA. In the language of generalized coherent states, the
growing-rank condition is the statement that the representation possesses a
genuine many-body classical limit---a coadjoint orbit of diverging
dimension~\cite{kam2023coherent}.

We restrict the analysis that follows to this case. For the orthogonal and
symplectic families the commutant of the tensor power is not the symmetric group
but the Brauer algebra, and the amplitudes are Brauer-algebra invariants rather
than $S_N$-immanants; the shape/complexity correspondence must be re-derived
there, and we do not claim it. The remainder of the paper concerns the unitary
family in its defining representation---the permanent case---realized by the
truncated bosonic platforms of Eq.~\eqref{eq:hlie}.

The generators $E^{(k)}_{\pm}$ in Eq.~\eqref{eq:hlie} build the local states of
each site, and their finite-dimensional algebra fixes how those states are
normalized---a point we will need when bounding leakage. Take a single site with
lowest-weight state $\lvert 0\rangle$, defined by $E_-\lvert 0\rangle = 0$ and
$H\lvert 0\rangle = -\Lambda\lvert 0\rangle$, where the weight $\Lambda>0$ labels
the local irreducible representation. Repeated action of $E_+$ generates the
local tower $\lvert k\rangle \propto (E_+)^k\lvert 0\rangle$. Unlike the bosonic
case, the non-Abelian commutation relations make the norm of these states depend
on $k$ in a way set by the algebra: writing
$[E_-,E_+]=H$ and $[H,E_\pm]=\pm c\,E_\pm$ with $c$ fixed by the structure
constants, one finds
\begin{equation}
    \mathcal{W}(k,\Lambda)
    \;\equiv\; \langle 0\rvert (E_-)^k (E_+)^k \lvert 0\rangle
    \;=\; k!\prod_{p=1}^{k}\Big(\Lambda - c\,\tfrac{p-1}{2}\Big),
    \label{eq:weight}
\end{equation}
so that the normalized local basis state is
$\lvert k\rangle = \mathcal{W}(k,\Lambda)^{-1/2}\,(E_+)^k\lvert 0\rangle$.

The single function $\mathcal{W}$ interpolates between the statistics the
framework is meant to unify. For the bosonic (Heisenberg--Weyl) case $c=0$, and
\eqref{eq:weight} reduces to $\mathcal{W}(k)=k!\,\Lambda^k$, the standard bosonic
normalization up to a rescaling of the generators; the tower is unbounded, $d\to
\infty$. For the $\mathfrak{su}(2)$ representation of spin $S$ one has $c=2$ and
$\Lambda=2S$, giving $\mathcal{W}(k)=k!\prod_{p=1}^{k}(2S-p+1)$. The product vanishes once
$k>2S$: the weight $\mathcal{W}(k)=0$ for $k\ge d$ encodes the hard-core truncation
exactly, cutting the local space off at dimension $d=2S+1$. The vanishing of
$\mathcal{W}$ at $k=d$ is precisely the boundary that the dynamics must not
cross, and it is the bunching of more than $d-1$ excitations onto one site---the
event $\mathcal{W}=0$ would forbid---that constitutes the leakage we bound in
Section~\ref{sec:mainresult}.

A remark on where this leakage lives, since it must be well defined even on
platforms whose local space is fundamentally finite. The generators
$E_\pm^{(k)}$, the projectors $Q,P$, and the leakage operator
$M_{vu}=QE_+^{(v)}E_-^{(u)}P$ are defined on the \emph{parent} bosonic Fock
space---the untruncated space in which the state $\lvert d\rangle$ exists---and
$QH_{\mathrm{Lie}}P$ measures the deviation of the ideal bosonic hopping from the
truncated dynamics the device actually runs. This deviation is the same object
whether the truncation is emergent (a Bose--Hubbard or Kerr low-energy manifold,
where $\lvert d\rangle$ is a real but gapped state) or fundamental (a spin-$S$ or
hyperfine manifold, where the ladder terminates, $E_+\lvert d-1\rangle=0$): in
both cases it is the amplitude the ideal dynamics would carry across the
truncation boundary $\mathcal{W}=0$ but the device cannot. The leakage analysis
therefore applies uniformly to emergent- and exact-truncation platforms alike,
with no separate treatment required.

One caveat completes the setup. The truncation that produces
Eq.~\eqref{eq:hlie} is exact only as $U\to\infty$ (or $K\to\infty$). At finite
$U$, virtual excursions into the gapped sector generate superexchange corrections
of order $J^2/U$, of density-density ($n_in_j$) or Ising ($Z_iZ_j$) type. These
do not undermine the protocol. They commute with the total excitation
number---a consequence of the $U(1)$ symmetry that survives the projection---so
they keep the state within the fixed-$N$ sector where the permanent structure is
defined, and do not feed the bunching-leakage channel analyzed in
Section~\ref{sec:mainresult}. Each such term has strength $O(J^2/U)$ per bond; restricted to the fixed-$N$
sector the induced perturbation has operator norm $\lVert V\rVert = O(nJ^2/U)$, set
by the $O(n)$ occupied bonds, and enters the sampling error at order
$\tau\lVert V\rVert = O(nJ^2\tau/U)$---a tunable quantity, reduced by working at
larger $U/J$. The analysis of
Section~\ref{sec:mainresult} uses no spectral or ergodic property of $V$: it
enters only through its operator norm, as a coherent perturbation to
$e^{-iH_{\mathrm{Lie}}t}$ whose effect on the sampling distribution is bounded
directly. The hardness rests on the permanent structure of the
amplitude~\cite{Valiant1979,Aaronson2011}, not on any property of $V$.

The setup is now complete, and we can state the central result of the paper;
the remainder of the analysis is devoted to its proof.

\begin{theorem}[Mode-scaling law]
\label{thm:main}
Fix a local dimension $d\ge 2$ and let $n$ excitations evolve for a time
$\tau=O(1)$ under $H_{\mathrm{Lie}}$ of Eq.~\eqref{eq:hlie} on $m$ modes, with
the coupling matrix drawn from the Gaussian model
$A_{vu}=Z_{vu}/\sqrt{m}$, $Z_{vu}\sim\mathcal{CN}(0,1)$ independent. For every
$\delta>0$ there is a constant $c_d$ such that, writing $c_{d,\delta}=c_d\,\delta^{-2/(d-1)}$, with probability
$1-e^{-\Omega(n)}$ over the coupling matrix, the total-variation distance
between the output distribution of the finite-dimensional sampler and the
ideal boson-sampling distribution of Eq.~\eqref{eq:prob} is at most $\delta$
whenever
\begin{equation*}
    m \;\ge\; c_{d,\delta}\;
    \tilde{O}\!\left(n^{1+\frac{2}{d-1}}\right).
\end{equation*}
The same statement holds for the physical Haar ensemble, $A$ the Hermitian
generator of a Haar-random unitary, conditional on the single
Haar-to-Gaussian comparison input isolated as
Conjecture~\ref{conj:transfer}.
\end{theorem}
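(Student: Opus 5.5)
The plan is to reduce the total-variation bound to an operator-norm bound on the leakage block $QH_{\mathrm{Lie}}P$, and then to control that norm by matrix concentration. Here $P$ projects onto the physical (truncated) subspace, in which every site holds at most $d-1$ excitations, and $Q=1-P$.

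\emph{Step 1 (Dyson series and the bunching order).} Write the ideal evolution $e^{-iH_{\mathrm{Lie}}\tau}$ on the parent Fock space and the device evolution $e^{-iPH_{\mathrm{Lie}}P\tau}$, and expand their difference in a Dyson series in $QH_{\mathrm{Lie}}P$. A single application of $QH_{\mathrm{Lie}}P$ only pushes one site from $d-1$ to $d$ excitations. The state therefore has non-negligible leakage only if it already carries a $(d-1)$-fold bunched site.

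Under $\tau=O(1)$ dilute evolution from a collision-free input, I will bound the weight of such configurations. The probability that some site carries $k$ excitations is at most $\binom{n}{k}m^{1-k}\cdot O(1)^k$, so the weight on the boundary layer is $\epsilon_{d-1}\lesssim n^{d-1}/m^{d-2}$ up to constants.

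The leaked amplitude is then bounded by
\[
\tau\,\|QH_{\mathrm{Lie}}P\|\,\sqrt{\epsilon_{d-1}}
\]
up to higher-order Dyson terms, which are geometrically smaller for $\tau=O(1)$. Since a state-vector error $\eta$ gives TV distance at most $2\eta$, this step reduces the problem to bounding $\|QH_{\mathrm{Lie}}P\|$ on the relevant sector.

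\emph{Step 2 (decomposition and concentration).} Write
\[
QH_{\mathrm{Lie}}P=\sum_{u,v}A_{vu}M_{vu}, \qquad M_{vu}=QE_+^{(v)}E_-^{(u)}P,
\]
with deterministic combinatorial operators $M_{vu}$ and independent Gaussian weights $A_{vu}=Z_{vu}/\sqrt m$. Apply the non-commutative Khintchine / matrix Gaussian series bound (Tropp), which gives
\[
\mathbb E\|\cdot\|\lesssim\sigma\sqrt{\log D},
\qquad
\sigma^2=\tfrac1m\max\Bigl(\bigl\|\textstyle\sum M_{vu}M_{vu}^\dagger\bigr\|,\ \bigl\|\textstyle\sum M_{vu}^\dagger M_{vu}\bigr\|\Bigr).
\]
This is where Lemmas~\ref{lem:row}--\ref{lem:col} enter: the two variance sums are diagonal in the occupation basis. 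The column sum counts, for a physical configuration, the sources $u$ (at most $n$) times targets $v$ holding $d-1$ excitations, weighted by $\mathcal W$-ratios of order $d$. The row sum is handled similarly, so that $\sigma^2=O(dn/m)\cdot m=O(dn)$.

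The dimension factor is $\log D=O(n\log m)$. That is too lossy, so I will instead use the sharper Bandeira--van Handel-type or intrinsic-dimension version, or restrict to the dilute sector, to obtain $\tilde O(\sqrt n)$. Gaussian concentration of the Lipschitz function $Z\mapsto\|\sum A_{vu}M_{vu}\|$, whose Lipschitz constant is at most $\sigma_*/\sqrt m$, then yields the failure probability $e^{-\Omega(n)}$. Lemma~\ref{lem:hermitian} disposes of the Hermitian pairing.

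\emph{Step 3 (solve for $m$).} The two steps combine to
\[
\mathrm{TV}\lesssim \tilde O(\sqrt n)\,\sqrt{n^{d-1}/m^{d-2}}\;+\;(\text{collision/normalization error of order } n^{d}/m^{d-1}).
\]
The balancing I expect, which matches the stated exponent, is with the leakage fraction scaling as $n^{(d+1)/2}/m^{(d-1)/2}$: the boundary layer needs $d-1$ coincidences among $n$ particles over $m$ sites, giving $(n^2/m)^{(d-1)/2}$, and this is multiplied by the $\sqrt n$ norm. Requiring this to be at most $\delta$ gives
\[
m\ge c_d\,\delta^{-2/(d-1)}\,n^{1+2/(d-1)}
\]
up to polylogs, which is exactly the form of $c_{d,\delta}$ in Theorem~\ref{thm:main}. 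I will redo the coincidence counting carefully so that the exponent comes out as $(n^2/m)^{(d-1)/2}\cdot\sqrt n/\sqrt n$, calibrating against the $d=2$ case, which must give $n^3$.

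\emph{Step 4 (Haar).} For $A=i\log U/\tau$ with $U$ Haar, invoke Conjecture~\ref{conj:transfer} to transfer the variance statistic and the norm bound from the Gaussian model, with the Weingarten correction of Appendix~\ref{app:weingarten} controlling the leading-order deviation.

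The main obstacle is Step 2: removing the $\sqrt{\log D}\sim\sqrt{n\log m}$ loss from the exponentially large Hilbert-space dimension. If the intrinsic-dimension route fails, I would try a moment method on $\mathrm{tr}\,(X^\dagger X)^p$ with $p\sim\log m$, restricted to the $n$-particle sector, exploiting that each $M_{vu}$ changes only two sites.
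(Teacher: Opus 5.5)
Your Step~1 pairs the leakage norm with the wrong amplitude, and the exponent cannot be recovered from that pairing. (Your $P,Q$ are swapped relative to the paper. This is harmless: your forward map from the truncation sector into the single-bunch sector is the adjoint of the paper's back-flow map and has the same norm.) You bound the leaked amplitude by $\tau\lVert QH_{\mathrm{Lie}}P\rVert\sqrt{\epsilon_{d-1}}$, where $\epsilon_{d-1}$ is the weight on configurations with a site at occupation $d-1$. That weight is not small. At $d=2$ every occupied site is at occupation $1$, so $\epsilon_1=1$ and your bound is $\tilde O(\sqrt n)$, independent of $m$. For $d\ge 3$ it is $\tilde O(n^{d/2}/m^{(d-2)/2})$, which has the wrong exponent (it would demand $m\gtrsim n^3$ at $d=3$). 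Step~3 then fits the exponent rather than deriving it, and your candidate expressions are mutually inconsistent. The expression $(n^2/m)^{(d-1)/2}\sqrt n=n^{d-1/2}/m^{(d-1)/2}$ agrees with $n^{(d+1)/2}/m^{(d-1)/2}$ only at $d=2$; at $d=3$ it gives $m\gtrsim n^{5/2}$. The expression $(n^2/m)^{(d-1)/2}$ alone gives $m\gtrsim n^2$ for every $d$, which contradicts your own $d=2$ calibration.

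The missing idea is where the paper places the norm. Split the parent-space state into its truncation-sector part and the leaked part $\lvert\epsilon\rangle$, and compare the former with the truncated dynamics. The deviation is then the Dyson integral \eqref{eq:dyson}, driven by $H_{\mathrm{Lie}}$ acting on $\lvert\epsilon(s)\rangle$. Since $H_{\mathrm{Lie}}$ moves one excitation at a time, only the single-bunch component of $\lvert\epsilon\rangle$ contributes. The small factor is therefore the amplitude already in the $d$-fold sector, $O(n^{d/2}/m^{(d-1)/2})$ by Eq.~\eqref{eq:collision_amp}; your own union bound taken at $k=d$ rather than $k=d-1$ gives exactly this. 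The $\tilde O(\sqrt n)$ norm multiplies it as a back-flow map. The product $\tilde O(n^{(d+1)/2}/m^{(d-1)/2})$ is what yields $c_d\,\delta^{-2/(d-1)}n^{1+2/(d-1)}$. Multiplying a worst-case norm by the full $(d-1)$-layer weight, as you do, discards the dilution.

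Step~2 contains a separate error, and it leads you to misidentify the obstacle. Of the two variance sums, the one on truncation-sector configurations (sources times sites at $d-1$) is indeed $O(n^2)$. The one on single-bunch configurations is not ``similar'', however. The lowering must act at the bunched site, and the excitation may then go to any of the $m-O(n)$ sites with occupation $\le d-2$, so this sum is $\Theta(m)$ (Lemmas~\ref{lem:row}--\ref{lem:col}). With weights of variance $1/m$ this gives $\sigma^2=\max\{O(n^2/m),\,O(1)\}=O(1)$ in the dilute regime, not $O(dn)$.

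With that value, the plain matrix Gaussian tail bound \eqref{eq:tail}, keeping the full $\log D=O(n\log m)$, already gives $\lVert QH_{\mathrm{Lie}}P\rVert=O(\sqrt{n\log m})$ with failure probability $D^{-c'}$. That is the paper's proof, and its $\sqrt n$ comes entirely from $\sqrt{\log D}$. So there is no dimension factor to remove, and your proposed remedies would not remove it anyway. First, $V_{\mathrm{col}}$ is of order $m$ on every single-bunch basis state, so the intrinsic dimension is at least of order $\dim\mathcal H_P$, which is still exponential in $n$. Second, a dimension-free bound of order $\sigma=O(1)$ is ruled out by the numerics, which put the Haar norm at $\sqrt{d(n-d+1)}$ (Eq.~\eqref{eq:sharp_constant}), with the Gaussian norm lying above it.
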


Table~\ref{tab:comparison} situates this against the prior bounds it
sharpens: the exponent improves from $m=\Omega(n^4)$ to
$m=\tilde\Omega(n^3)$ at $d=2$, resolving (in the Gaussian model, and
conditionally for Haar) the conjecture of
Ref.~\cite{peropadre2017equivalence}, and falls to the collision-free
threshold $m=\tilde\Omega(n^2)$ at $d=3$. The engine of the proof is a
concentration bound on the bunching-leakage operator norm
(Theorem~\ref{thm:leakage} below), whose strategy is sketched in
Fig.~\ref{fig:leakage}; the numerical verification of
Section~\ref{subsec:numerics} supports both the concentration and its
tightness.

\begin{table*}[t]
\centering
\caption{Mode-scaling bounds for constant total-variation error $\delta$ across
finite-dimensional sampling models, where $d$ is the local Hilbert-space dimension
(the highest-weight representation dimension of the local algebra; $d=2$ for
spin-$\tfrac{1}{2}$, $d=2S+1$ for spin $S$). For each model, $\delta$ is the
leakage error and the third column gives the mode count $m$ that keeps it from
growing. The Lie-algebraic result of this work sharpens the exponent to
$m=\tilde{\Omega}(n^{1+2/(d-1)})$, recovering the spin-$\tfrac{1}{2}$ ($d=2$) and
spin-$S$ ($d=2S+1$) rows as special cases with an $n^{1/2}$ improvement from the
concentration bound. This sharpening is unconditional in the Gaussian model of
the coupling matrix; its transfer to the physical Haar ensemble is conditional on
Conjecture~\ref{conj:transfer}.}
\label{tab:comparison}

\begin{ruledtabular}
\begin{tabular}{lccc}
Model & $\delta$ & Constant-error scaling & Ref. \\
\hline
Spin-$\tfrac{1}{2}$
  & $\dfrac{n^{2}}{\sqrt{m}}$
  & $m=\Omega(n^{4})$
  & \cite{peropadre2017equivalence} \\[1.2ex]
Generalized bosons
  & $\dfrac{n^{2}}{\sqrt{m}}$
  & $m=\Omega(n^{4})$
  & \cite{kuo2022boson,kam2025quantum} \\[1.2ex]
Spin $S$
  & $\dfrac{n^{S+3/2}}{m^{S}}$
  & $m=\Omega\!\big(n^{1+3/(2S)}\big)$
  & \cite{kam2025beyond} \\[1.2ex]
Lie-algebraic ($d$)
  & $\dfrac{n^{(d+1)/2}}{m^{(d-1)/2}}$
  & $m=\tilde{\Omega}\!\big(n^{1+2/(d-1)}\big)$
  & This work
\end{tabular}
\end{ruledtabular}
\end{table*}

\section{Bunching Leakage and the Mode-Scaling Bound}
\label{sec:mainresult}

Section~\ref{sec:framework} established that, in the collision-free sector, the
sampling amplitude is the permanent of a submatrix of $A$, and that the local
truncation forbids more than $d-1$ excitations per site---the boundary
$\mathcal{W}=0$ of Eq.~\eqref{eq:weight}. The protocol works only if the
dynamics stays clear of that boundary. This section makes the requirement
quantitative: we determine how many modes $m$ are needed, as a function of the
particle number $n$ and the local dimension $d$, to keep the bunching leakage
negligible and the sampled distribution close to the ideal permanent
distribution. We first locate the leakage in the output probabilities, which
gives a preliminary scaling from a counting argument; the remainder of the
section replaces that estimate by a rigorous operator-norm bound.

Consider $n$ excitations injected into distinct input modes,
$\lvert\Psi_{\mathrm{in}}\rangle = \Lambda^{-n/2}\prod_{k=1}^{n}
E^{(s_k)}_{+,\mathrm{in}}\lvert\varnothing\rangle$, evolving under the
symmetry-preserving linear map $E^{(s_k)}_{+,\mathrm{in}}\to
\sum_j U_{j,s_k}E^{(j)}_{+,\mathrm{out}}$. Expanding the evolved state in the
normalized output basis and reading off the coefficient of an output occupation
pattern $\mathbf m=(m_1,\dots,m_m)$ with $\sum_j m_j=n$ gives, using the local
weights $\mathcal{W}$ of Eq.~\eqref{eq:weight},
\begin{equation}
    P_{\mathfrak{g}}(\mathbf s\to\mathbf m)
    = \frac{\lvert\mathrm{Per}(U_{\mathbf m,\mathbf s})\rvert^2}
           {\prod_{j} m_j!}\;
      \prod_{j}\chi_{\mathfrak{g}}(m_j),
    \;
    \chi_{\mathfrak{g}}(m_j)\equiv
    \frac{\mathcal{W}(m_j,\Lambda)}{m_j!\,\Lambda^{m_j}} .
    \label{eq:prob}
\end{equation}
The factor $\lvert\mathrm{Per}(U_{\mathbf m,\mathbf s})\rvert^2/\prod_j m_j!$ is
exactly the ideal boson-sampling probability; the bunching penalty
$\chi_{\mathfrak{g}}(m_j)$ collects the finite-dimensional corrections and
encodes the deviation of the matter-based platform from a true boson sampler.

The penalty has a sharp interpretation in two regimes. For collision-free output,
$m_j\in\{0,1\}$, one has $\chi_{\mathfrak{g}}(0)=\chi_{\mathfrak{g}}(1)=1$ from
Eq.~\eqref{eq:weight}, and Eq.~\eqref{eq:prob} reduces to the exact permanent
distribution: in the collision-free sector the finite-dimensional system samples
identically to boson sampling, and the \#P-hardness is intact. Once a collision
occurs, $m_j\ge 2$, the penalty departs from unity, and for a finite local
dimension it vanishes outright at $m_j\ge d$, where $\mathcal{W}=0$ forbids
the configuration. Bunching of order $d$ is therefore the failure mode: it is the
point at which the matter-based sampler ceases to reproduce the permanent
distribution.

The resource cost of avoiding it follows from a counting estimate. Distributing
$n$ excitations uniformly over $m$ modes, the generalized birthday paradox gives
a probability of order $n^{d}/m^{d-1}$ for forming a $d$-fold collision, so the
amplitude for the leading single-bunch error scales as
$n^{d/2}/m^{(d-1)/2}$. Requiring this to vanish asymptotically gives a
preliminary mode requirement
\begin{equation}
    m = \tilde{\Omega}\!\left(n^{\,1+\frac{2}{d-1}}\right),
    \label{eq:prelim_scaling}
\end{equation}
which for hard-core qubits ($d=2$) reads $m=\tilde\Omega(n^3)$ and for spin-$1$
($d=3$) reads $m=\tilde\Omega(n^2)$. This estimate, however, assumes the leakage
is controlled by the collision probability alone. The actual error is governed by
the norm of the operator that drives population into the bunched sector, and
bounding that norm---not merely counting collisions---is what the rest of this
section establishes. As we will see, the naive counting happens to give the
correct exponent only because the operator norm concentrates at
$\tilde O(\sqrt n)$; the worst-case value $O(n)$, used in earlier
analyses~\cite{peropadre2017equivalence}, would instead force the far more
demanding $m=\Omega(n^4)$.

\subsection{Perturbative error framework}
\label{subsec:perturbative}

To turn the counting estimate of Eq.~\eqref{eq:prelim_scaling} into a rigorous
bound, we track the leakage at the level of the evolved state rather than the
output probabilities. The construction generalizes that of
Peropadre et al.~\cite{peropadre2017equivalence}, who showed that exact boson
sampling is reproduced by the short-time evolution of a long-range $XY$ spin
model: the photonic modes are replaced by spin excitations, and the
linear-optical unitary by the evolution operator on the spin Hilbert space, with
the interference-driven permanent amplitudes recovered in the dilute limit. This
was later adapted to generalized statistics~\cite{kuo2022boson}; here we carry it
to an arbitrary finite-dimensional local algebra, with $H_{\mathrm{Lie}}$ of
Eq.~\eqref{eq:hlie} as the evolution generator.

A feature of this construction, which we use throughout, is that the evolution
time is short. Reproducing the sampling distribution requires only that the
single-particle matrix $A$ be drawn from a sufficiently scrambled ensemble, not
that the many-body evolution form a unitary $2$-design; for a dense Haar-random
$A$ the required interference is reached in a time independent of system size,
$\tau=O(1)$. We carry $\tau$ explicitly through the error bound and return to its
physical realizability---which is not automatic on platforms with local
interactions---in Section~\ref{sec:experiment}.

The initial state $\lvert\phi(0)\rangle = \Lambda^{-n/2}\prod_{k=1}^{n}
E^{(s_k)}_{+,\mathrm{in}}\lvert\varnothing\rangle$ evolves under
$H_{\mathrm{Lie}}$, and we decompose the evolved state as
$\lvert\phi(t)\rangle = Q\lvert\phi(t)\rangle + \lvert\epsilon(t)\rangle$, where
$Q$ projects onto the \emph{truncation sector}
$\mathcal H_Q=\{\vec n\in\{0,\dots,d-1\}^m:\sum_k n_k=n\}$, in which every site
respects the local dimension, and $\lvert\epsilon(t)\rangle$ collects the leaked
configurations with at least one site at occupation $\ge d$. The strictly
collision-free subspace $n_k\le 1$, on which Eq.~\eqref{eq:prob} reduces to the
permanent distribution, is contained in $\mathcal H_Q$ and coincides with it only
at $d=2$; for $d\ge 3$ the leakage map resolves a bunch by taking the
over-occupied site from $d$ to $d-1$, so its image lies in $\mathcal H_Q$ rather
than in the collision-free subspace, consistent with the occupation constraints
$n_u=d-1$ and $n_v\le d-2$ used in Lemmas~\ref{lem:row}--\ref{lem:col}. The ideal
target is the dynamics confined to that sector,
\begin{equation}
    i\partial_t\lvert\psi(t)\rangle = QH_{\mathrm{Lie}}Q\lvert\psi(t)\rangle,
    \qquad \lvert\psi(0)\rangle=\lvert\phi(0)\rangle,
    \label{eq:ideal_eom}
\end{equation}
and the quantity to bound is the deviation
$\lvert\delta(t)\rangle = Q\lvert\phi(t)\rangle - \lvert\psi(t)\rangle$.

Differentiating $\lvert\delta(t)\rangle$ and using the two equations of motion
gives
\begin{equation}
    i\partial_t\lvert\delta(t)\rangle
    = QH_{\mathrm{Lie}}Q\lvert\delta(t)\rangle
    + QH_{\mathrm{Lie}}\lvert\epsilon(t)\rangle,
    \label{eq:delta_eom}
\end{equation}
whose solution is the Dyson integral
\begin{equation}
    \lvert\delta(t)\rangle
    = -i\int_0^t e^{-iQH_{\mathrm{Lie}}Q(t-s)}\,
      QH_{\mathrm{Lie}}\lvert\epsilon(s)\rangle\,ds .
    \label{eq:dyson}
\end{equation}
The operator $QH_{\mathrm{Lie}}$ acting on $\lvert\epsilon(s)\rangle$
annihilates every configuration with more than one bunched site, because
$H_{\mathrm{Lie}}$ relocates a single excitation at a time; only the single-bunch
component of $\lvert\epsilon(s)\rangle$ survives. We may therefore insert the
projector $P$ onto the single-bunch error subspace $\mathcal H_P$---one site at
occupation $d$, all others at occupation $\le d-1$---and bound the norm of
Eq.~\eqref{eq:dyson} by
\begin{equation}
    \lVert\delta(t)\rVert
    \le \int_0^t \lVert QH_{\mathrm{Lie}}P\rVert\,
        \lVert P\lvert\epsilon(s)\rangle\rVert \, ds ,
    \label{eq:delta_bound}
\end{equation}
where $\lVert O\rVert$ denotes the operator norm (largest singular value).

The two factors in Eq.~\eqref{eq:delta_bound} play distinct roles. The second,
$\lVert P\lvert\epsilon(s)\rangle\rVert$, is the instantaneous amplitude for
the state to occupy a single $d$-fold collision, and it is controlled by the
counting argument of the previous subsection. For a local dimension $d$, the
generalized birthday paradox gives
\begin{equation}
    \lVert P\lvert\epsilon(s)\rangle\rVert
    \le O\!\left(\frac{n^{d/2}}{m^{(d-1)/2}}\right),
    \label{eq:collision_amp}
\end{equation}
which recovers $O(n/\sqrt m)$ for the hard-core case $d=2$; the amplitude bound
follows from the spin-$S$ birthday paradox of Ref.~\cite{kam2025beyond}
(App.~A, Eq.~(9)), with $d=2S+1$. This amplitude bound is a high-probability
statement over the random coupling, not a deterministic one: the generalized
birthday paradox controls the collision weight up to an event of exponentially
small probability over the Haar (or Gaussian) ensemble, of the same order as the
failure probability of the leakage bound. The failure probability of
Theorem~\ref{thm:main} is therefore the union of the two, and remains
$e^{-\Omega(n)}$. Nor is the estimate circular:
$\lVert P\lvert\epsilon(s)\rangle\rVert$ is the single-bunch weight developed by
the collision-free reference dynamics $QH_{\mathrm{Lie}}Q$, bounded by the
birthday count independently of the leakage correction $\lvert\delta\rangle$,
which re-enters only at the next order through the Dyson integral
Eq.~\eqref{eq:dyson}. The first factor,
$\lVert QH_{\mathrm{Lie}}P\rVert$, is the operator norm of the map that drives
population from the bunched subspace back into the collision-free sector. This
factor, not the collision amplitude, is the bottleneck, and the difficulty of the
problem is concentrated entirely in it.

The difficulty is a structural correlation, not a shortage of estimates. For the
spin-$1/2$ case, Ref.~\cite{peropadre2017equivalence} bounded this norm by the
worst-case kinetic energy of the $n$ excitations, obtaining the deterministic
estimate $\lVert QH_{\mathrm{Lie}}P\rVert\le O(n)$. Substituting this and the
$d=2$ collision amplitude into Eq.~\eqref{eq:delta_bound} gives a total-variation
error of order $n^2/\sqrt m$, and requiring it to vanish forces $m=\Omega(n^4)$.
On the basis of small-size numerics ($n\le 6$), the same work conjectured that
for a Haar-random transition matrix the typical norm should instead concentrate
at $O(\sqrt n)$, which would tighten the requirement to $m\sim n^3$ and match the
high-mode regime of photonic boson sampling; a proof was left open. The
obstruction is that the many-body operator $QH_{\mathrm{Lie}}P$ lives in a space
of dimension $O(m^n)$, while all of its nonzero entries are determined by the
$O(m^2)$ parameters of the single-particle matrix $A$. Its entries are therefore
strongly correlated, and modeling them as independent random
variables---the naive application of random-matrix theory---violates this
 structure and yields inconsistent bounds. The next two subsections resolve this
by isolating the independent randomness explicitly, which restores the
applicability of matrix concentration inequalities and yields the conjectured
$\tilde O(\sqrt n)$.

\begin{figure*}[t]
  \centering
  \includegraphics[width=\textwidth]{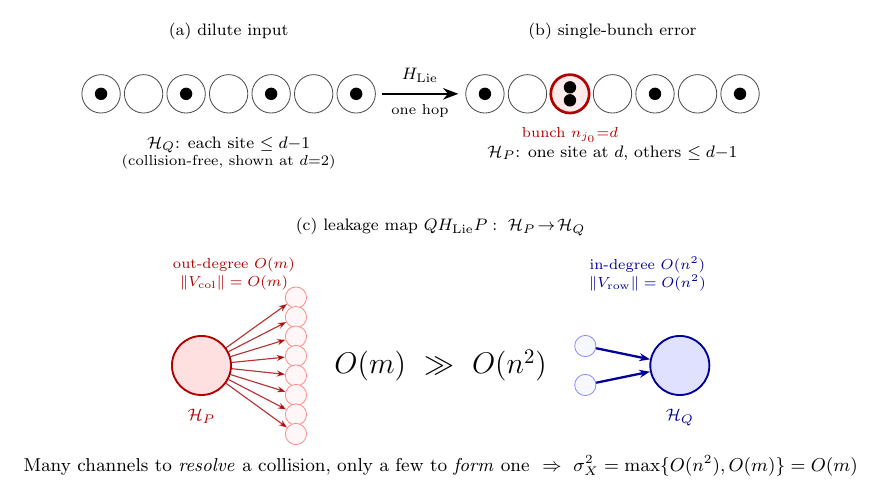}
  \caption{Sketch of the proof of Theorem~\ref{thm:leakage}: the bunching-leakage mechanism and the variance asymmetry behind it. \textbf{(a)} $n$ excitations injected into distinct
  modes populate the collision-free sector $\mathcal{H}_Q$ (occupation $\le d-1$;
  drawn at $d=2$). \textbf{(b)} A single action of $H_{\mathrm{Lie}}$ hops one
  excitation onto an occupied site, forming a $d$-fold bunch and entering the
  single-bunch error sector $\mathcal{H}_P$. \textbf{(c)} The leakage map
  $QH_{\mathrm{Lie}}P:\mathcal{H}_P\!\to\!\mathcal{H}_Q$ has a strongly asymmetric
  bipartite support: an $\mathcal{H}_P$ configuration scatters into $O(m)$ empty
  modes (out-degree $O(m)$, hence $\|V_{\mathrm{col}}\|=O(m)$), whereas an
  $\mathcal{H}_Q$ configuration is reached from only $O(n^2)$ collision-forming
  pairings (in-degree $O(n^2)$, hence $\|V_{\mathrm{row}}\|=O(n^2)$). The dilute
  dominance $O(m)\gg O(n^2)$ fixes $\sigma_X^2=O(m)$; the subsequent
  $1/\sqrt{m}$ rescaling turns this into the $\tilde{O}(\sqrt{n})$ leakage bound.}
  \label{fig:leakage}
\end{figure*}

\subsection{Operator decomposition and matrix variance}
\label{subsec:decomposition}

The correlation identified at the end of the previous subsection is resolved by
separating the randomness from the combinatorics. We rescale the leakage operator
as $X=\sqrt m\,QH_{\mathrm{Lie}}P$ and, applying the Gaussian approximation to the
single-particle entries $A_{vu}\to Z_{vu}/\sqrt m$ with $Z_{vu}\sim\mathcal{CN}(0,1)$
independent (this model is exact for the bound proved below; its transfer to the
Haar ensemble is discussed in Appendix~\ref{app:weingarten}, conditional on
Conjecture~\ref{conj:transfer} there), write $X$ as a linear
combination of deterministic many-body operators carrying independent scalar
weights:
\begin{equation}
    X = \sum_{u=1}^{m}\sum_{v\ne u} Z_{vu}\,M_{vu},
    \qquad
    M_{vu} = Q\,E^{(v)}_{+}E^{(u)}_{-}\,P .
    \label{eq:decomposition}
\end{equation}
Each $M_{vu}$ is a fixed matrix that encodes the combinatorics of annihilating an
excitation at site $u$ and creating one at site $v$, subject to the boundary
conditions imposed by $P$ and $Q$; all of the randomness sits in the independent
scalars $Z_{vu}$. With $X$ in this form, the non-commutative matrix Gaussian
inequality applies, and the bound on $\lVert X\rVert$ is controlled by the matrix
variance statistic
\begin{equation}
    \sigma_X^2 = \max\!\left\{
    \Big\lVert \sum_{u,v} M_{vu}M_{vu}^\dagger \Big\rVert,\;
    \Big\lVert \sum_{u,v} M_{vu}^\dagger M_{vu} \Big\rVert
    \right\}
    \equiv \max\{\lVert V_{\mathrm{row}}\rVert,\,\lVert V_{\mathrm{col}}\rVert\},
    \label{eq:variance_stat}
\end{equation}
using $\mathbb{E}\lvert Z_{vu}\rvert^2=1$. We bound the two variance operators in
turn. Their non-zero matrix elements are set by the local generators $E_\pm$,
whose amplitudes are bounded by a representation-dependent constant
$\gamma_{\mathfrak g}=O(1)$ since $d$ is a fixed hardware parameter
($d\ll n,m$); the scaling is therefore governed by the lattice combinatorics
rather than by the local operator weights. The bipartite support of the leakage
map and the resulting variance asymmetry $\|V_{\mathrm{col}}\|=O(m)\gg
\|V_{\mathrm{row}}\|=O(n^2)$ are summarized in Fig.~\ref{fig:leakage}.

\begin{lemma}[Row variance]
\label{lem:row}
The row-variance operator $V_{\mathrm{row}}=\sum_{u,v}M_{vu}M_{vu}^\dagger$ acts
within the truncation sector $\mathcal H_Q$, is diagonal in the Fock basis,
and satisfies $\lVert V_{\mathrm{row}}\rVert \le O(n^2)$.
\end{lemma}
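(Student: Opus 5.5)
We work directly with the Fock-basis action of the operators $M_{vu}=Q E^{(v)}_+E^{(u)}_-P$. Since $M_{vu}$ maps $\mathcal H_P$ into $\mathcal H_Q$, each product $M_{vu}M_{vu}^\dagger = Q E^{(v)}_+E^{(u)}_- P E^{(u)}_+E^{(v)}_- Q$ acts on $\mathcal H_Q$. This gives the first claim of Lemma~\ref{lem:row} immediately.

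\textbf{Diagonality.} Fix a basis state $\lvert\vec n\rangle\in\mathcal H_Q$ and follow it through $M_{vu}^\dagger$. The operator $E^{(v)}_-$ removes one excitation at $v$, and $E^{(u)}_+$ adds one at $u$. The projector $P$ keeps the result only if site $u$ now holds exactly $d$ excitations, i.e.\ $n_u=d-1$, with every other site at most $d-1$. Then $M_{vu}$ reverses the hop: it takes one excitation from $u$ back to $v$. For fixed $(u,v)$ the composite therefore returns $\lvert\vec n\rangle$ to itself, times a scalar weight. Each summand is diagonal, so $V_{\mathrm{row}}$ is diagonal, with entries
\begin{equation*}
\big(V_{\mathrm{row}}\big)_{\vec n\vec n}=\sum_{u\ne v}\mathbf 1[n_u=d-1]\,\mathbf 1[n_v\ge 1]\;\big\lvert\langle \vec n'\rvert E^{(u)}_+E^{(v)}_-\lvert\vec n\rangle\big\rvert^2 ,
\end{equation*}
where $\vec n'$ is the bunched configuration. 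The operator $Q$ on the return leg is automatic here, because the final state is $\vec n\in\mathcal H_Q$.

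\textbf{Norm bound.} The operator norm of a diagonal matrix is its largest entry, so we bound each weight and count the pairs that contribute. Each matrix element is a product of two local ladder amplitudes taken from Eq.~\eqref{eq:weight}, namely $\sqrt{\mathcal W(k+1)/\mathcal W(k)}$ evaluated at occupations below $d$. These are at most $\gamma_{\mathfrak g}^2=O(1)$ for fixed $d$. Now count the pairs $(u,v)$:
\begin{itemize}
\item The site $v$ must be occupied. At most $n$ sites are occupied, since $\sum_k n_k=n$.
\item The site $u$ must satisfy $n_u=d-1$. There are at most $n/(d-1)\le n$ such sites.
\end{itemize}
Hence at most $n^2$ pairs contribute, and $\lVert V_{\mathrm{row}}\rVert\le\gamma_{\mathfrak g}^4 n^2=O(n^2)$.

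\textbf{Main obstacle.} The step needing most care is the exact support condition. One must check that $P$ and $Q$ rule out every off-diagonal cross term $M_{vu}M_{v'u'}^\dagger$; these do not appear in the sum anyway, since the sum in the lemma runs only over matched pairs $(vu,vu)$. One must also check the boundary case where $v$ itself sits at $d-1$. If needed, we would bound more crudely: $(V_{\mathrm{row}})_{\vec n\vec n}\le \gamma^4\,\#\{u:n_u\ge1\}\cdot\#\{v:n_v\ge1\}$. This crude bound still gives $O(n^2)$, independent of $m$, which is the point of the lemma.
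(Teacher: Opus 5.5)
Your proposal is correct and is essentially the paper's proof: you read $QE_+^{(v)}E_-^{(u)}PE_+^{(u)}E_-^{(v)}Q$ right to left so that $P$ forces $n_u=d-1$ and $n_v\ge 1$, you get diagonality because $M_{vu}$ undoes the hop created by $M_{vu}^\dagger$, and you use the same count of at most $\lfloor n/(d-1)\rfloor$ choices of $u$ times $n$ choices of $v$, each with an $O(1)$ local weight. One small correction: the $d-1\to d$ amplitude at site $u$ should be taken in the parent bosonic Fock space, as the paper specifies, rather than from Eq.~\eqref{eq:weight}, since that formula gives $\mathcal W(d)=0$ in the truncated representation; this changes only the constant, not the $O(n^2)$ bound.
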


\begin{proof}
For a normalized state $\lvert\Phi\rangle\in\mathcal H_Q$, expand
$M_{vu}M_{vu}^\dagger = Q\,E^{(v)}_{+}E^{(u)}_{-}\,P\,E^{(u)}_{+}E^{(v)}_{-}\,Q$.
Read right to left: $E^{(v)}_{-}$ requires site $v$ to hold at least one
excitation; $E^{(u)}_{+}$ then adds one to site $u$, and to enter the single-bunch
subspace through $P$ site $u$ must reach occupation $d$, so it must already hold
$d-1$ excitations (with $u\ne v$). The remaining operators reverse these moves and
restore the original configuration, so the operator is diagonal, scaling each
state by $\gamma_{u,v}^2\le\gamma_{\mathfrak g}^2$:
\begin{equation}
    V_{\mathrm{row}}\lvert\Phi\rangle
    = \sum_{u,v}\gamma_{u,v}^2\,
      \delta_{n_u,\,d-1}\,\Theta(n_v\ge 1)\,\lvert\Phi\rangle .
    \label{eq:row_diag}
\end{equation}
For any $\lvert\Phi\rangle\in\mathcal H_Q$, at most $\lfloor n/(d-1)\rfloor=O(n)$
sites satisfy $n_u=d-1$ and at most $n$ sites satisfy $n_v\ge 1$. The largest
eigenvalue is therefore bounded by
$\gamma_{\mathfrak g}^2\cdot O(n)\cdot O(n)=O(n^2)$.
\end{proof}

\begin{lemma}[Column variance]
\label{lem:col}
The column-variance operator $V_{\mathrm{col}}=\sum_{u,v}M_{vu}^\dagger M_{vu}$
acts within the single-bunch sector $\mathcal H_P$, is diagonal in the Fock basis,
and satisfies $\lVert V_{\mathrm{col}}\rVert = O(m)$.
\end{lemma}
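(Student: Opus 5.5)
We mirror the proof of Lemma~\ref{lem:row}, now fixing a normalized Fock state $\lvert\Psi\rangle\in\mathcal H_P$ and reading the product
\[
M_{vu}^\dagger M_{vu}=P\,E^{(u)}_{+}E^{(v)}_{-}\,Q\,E^{(v)}_{+}E^{(u)}_{-}\,P
\]
from right to left. The operator $E^{(u)}_{-}$ removes one excitation from site $u$. For the result to land in $\mathcal H_Q$ through $Q$, the bunch must be resolved. Since $\lvert\Psi\rangle$ has exactly one site $b$ at occupation $d$ and all others at $\le d-1$, this forces $u=b$. The operator $E^{(v)}_{+}$ then adds one excitation at $v\ne u$, and $Q$ requires the new occupation to be $\le d-1$, i.e.\ $n_v\le d-2$. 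The remaining factors $E^{(v)}_{-}$ and $E^{(u)}_{+}$ undo exactly these two moves. The resulting configuration is the original one, so $M_{vu}^\dagger M_{vu}$ maps each Fock state of $\mathcal H_P$ to a multiple of itself. This shows that $V_{\mathrm{col}}$ preserves $\mathcal H_P$ and is diagonal, with
\begin{equation*}
V_{\mathrm{col}}\lvert\Psi\rangle=\sum_{v\ne b}\gamma_{b,v}^2\,\Theta(n_v\le d-2)\,\lvert\Psi\rangle .
\end{equation*}
Here $\gamma_{b,v}^2$ is the product of the squared local matrix elements $\lvert\langle d-1\rvert E_-\lvert d\rangle\rvert^2\,\lvert\langle n_v+1\rvert E_+\lvert n_v\rangle\rvert^2$. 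Because $d$ is fixed, this product is bounded above by $\gamma_{\mathfrak g}^2$. It is also bounded below by a positive constant, since every local matrix element below the truncation is nonzero by Eq.~\eqref{eq:weight}.

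The eigenvalue is then a sum over $v$ alone, because $u$ is pinned to the unique bunched site. This is the key contrast with Lemma~\ref{lem:row}, where both indices ranged over $O(n)$ choices.

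For the \textbf{upper bound}, there are at most $m-1$ admissible $v$, so every eigenvalue is at most $\gamma_{\mathfrak g}^2(m-1)=O(m)$.

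For the \textbf{matching lower bound}, which is needed for the equality $\lVert V_{\mathrm{col}}\rVert=O(m)$ to be read as $\Theta(m)$, we count the sites excluded from the sum. A site fails the condition only if $n_v\ge d-1$. Since the total number of excitations is $n$, at most $n/(d-1)$ sites can fail. Hence at least $m-1-n/(d-1)\ge m/2$ sites contribute in the dilute regime $m\gg n$, and each contributes at least a constant. Every eigenvalue is therefore $\Theta(m)$.

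The main point requiring care is the first step: justifying that $u$ is forced to be the bunched site. In particular, we must check that no off-diagonal terms arise. Such terms would require two distinct pairs of moves $(u,v)$ and $(u',v')$ connecting the same $P$-configuration to the same $Q$-configuration and back. This is excluded because each $M_{vu}$ is a single hop with fixed source and target. We also need the local matrix elements to be uniformly bounded, which follows from the finite representation dimension. Everything else is elementary counting.
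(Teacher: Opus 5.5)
Your proof is correct and follows the paper's argument: you pin $u$ to the unique bunched site, require $n_v\le d-2$ so the image survives $Q$, read off diagonality because $M_{vu}^\dagger M_{vu}$ undoes a single hop, and count $m-O(n)$ admissible targets, with the $\Theta(m)$ lower bound as a harmless extra. One correction to that extra step: the nonvanishing of $\langle d-1\rvert E_-\lvert d\rangle$ does not follow from Eq.~\eqref{eq:weight}, which gives $\mathcal{W}(d,\Lambda)=0$ for spin representations; it follows from $M_{vu}$ being defined on the parent bosonic Fock space, where this element equals $\sqrt d$, so that $\gamma_{b,v}^2=d\,(n_v+1)\ge d$.
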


\begin{proof}
For $\lvert\Psi\rangle\in\mathcal H_P$, exactly one site $u_0$ holds $d$
excitations and the rest satisfy $n_k\le d-1$. Expanding
$M_{vu}^\dagger M_{vu}=P\,E^{(u)}_{+}E^{(v)}_{-}\,Q\,E^{(v)}_{+}E^{(u)}_{-}\,P$,
the first action $E^{(u)}_{-}$ must resolve the truncation violation to survive
the projection $Q$, which forces $u=u_0$. The subsequent $E^{(v)}_{+}$ adds an
excitation at site $v$, which remains in the truncation sector $\mathcal H_Q$ only
if $v$ currently holds $n_v\le d-2$. The operator is again diagonal:
\begin{equation}
    V_{\mathrm{col}}\lvert\Psi\rangle
    = \sum_{v\ne u_0}\gamma_{u_0,v}^2\,\Theta(n_v\le d-2)\,\lvert\Psi\rangle .
    \label{eq:col_diag}
\end{equation}
The number of target sites with $n_v\le d-2$ is $m$ minus the at most
$\lfloor n/(d-1)\rfloor=O(n)$ sites at occupation $d-1$, leaving $m-O(n)$ terms.
In the scaling limit this gives $\lVert V_{\mathrm{col}}\rVert=O(m)$.
\end{proof}

The two bounds combine to $\sigma_X^2=\max\{O(n^2),\,O(m)\}$, and in the dilute
regime $m\gg n^2$ the column variance dominates,
\begin{equation}
    \sigma_X^2 = O(m).
    \label{eq:sigma_final}
\end{equation}
This asymmetry has a direct physical reading. Starting from a dilute,
collision-free state (occupation $\le 1$, the $d=2$ picture), there are only
$O(n^2)$ combinatorial pathways that bring two of the $n$ excitations together to
form a collision, but $O(m)$ empty modes into which a bunched excitation can
scatter to resolve one. The large mode space suppresses the leakage precisely
because resolving a collision has far more available channels than creating
one---the same dilution that makes the truncation sector accurate makes the
variance scale with $m$ rather than with the particle number.

\begin{lemma}[Hermitian reduction]
\label{lem:hermitian}
For the physical Hermitian ensemble $A_{vu}=A_{uv}^{*}$ with independent
upper-triangle entries $A_{vu}=Z_{vu}/\sqrt m$, $Z_{vu}\sim\mathcal{CN}(0,1)$
($u<v$), the rescaled leakage operator is
$X_H=\sum_{u<v}\big(Z_{vu}M_{vu}+Z_{vu}^{*}M_{uv}\big)$, and its second-moment
operators coincide with those of the independent model,
$\mathbb E\,X_HX_H^{\dagger}=V_{\mathrm{row}}$ and
$\mathbb E\,X_H^{\dagger}X_H=V_{\mathrm{col}}$. Hence
$\sigma_{X_H}^2=\sigma_X^2=O(m)$ and Theorem~\ref{thm:leakage} holds verbatim for
the Hermitian ensemble, with the same threshold.
\end{lemma}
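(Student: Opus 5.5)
Expanding the second moments of $X_H$ directly and using only the two Gaussian moment identities $\mathbb E\,\lvert Z_{vu}\rvert^2=1$ and $\mathbb E\,Z_{vu}^2=0$, together with independence across unordered pairs $\{u,v\}$, should give the result. The Hermitian coupling enters $QH_{\mathrm{Lie}}P$ through both $A_{vu}E_+^{(v)}E_-^{(u)}$ and $A_{uv}E_+^{(u)}E_-^{(v)}$. Since $A_{uv}=A_{vu}^*$, I regroup the sum over ordered pairs into unordered pairs $u<v$, which gives the stated form $X_H=\sum_{u<v}(Z_{vu}M_{vu}+Z_{vu}^*M_{uv})$. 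Any Hermitian-conjugate term of Eq.~\eqref{eq:hlie} can be absorbed into this pairing, or treated identically. Diagonal terms $u=v$ do not contribute: they preserve occupations, so $QE_+^{(u)}E_-^{(u)}P=0$.

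Next, expand
\begin{equation*}
X_HX_H^\dagger=\sum_{u<v}\sum_{u'<v'}\big(Z_{vu}M_{vu}+Z_{vu}^*M_{uv}\big)\big(Z_{v'u'}^*M_{v'u'}^\dagger+Z_{v'u'}M_{u'v'}^\dagger\big).
\end{equation*}
For distinct pairs $\{u,v\}\neq\{u',v'\}$, independence and zero mean kill every cross term. For the same pair, the four terms carry $\lvert Z\rvert^2$, $Z^2$, $(Z^*)^2$ and $\lvert Z\rvert^2$. The two mixed terms carry $Z^2$ and $(Z^*)^2$ and multiply $M_{vu}M_{uv}^\dagger$ and $M_{uv}M_{vu}^\dagger$; they vanish in expectation because the pseudo-variance of a circularly symmetric Gaussian is zero. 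What remains is $\sum_{u<v}(M_{vu}M_{vu}^\dagger+M_{uv}M_{uv}^\dagger)=\sum_{u\neq v}M_{vu}M_{vu}^\dagger=V_{\mathrm{row}}$. The computation for $X_H^\dagger X_H$ is identical and gives $V_{\mathrm{col}}$. Lemmas~\ref{lem:row} and~\ref{lem:col} then give $\sigma_{X_H}^2=\sigma_X^2=O(m)$.

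The step I expect to need the most care is transferring Theorem~\ref{thm:leakage} verbatim, since the matrix Gaussian inequality is stated for a sum $\sum_k g_k B_k$ with real i.i.d.\ standard Gaussian $g_k$. I plan to write $Z_{vu}=(g_{vu}+ih_{vu})/\sqrt2$ with $g,h$ real, independent $\mathcal N(0,1)$. This gives
\begin{equation*}
X_H=\sum_{u<v}\Big[g_{vu}\tfrac{M_{vu}+M_{uv}}{\sqrt2}+h_{vu}\tfrac{i(M_{vu}-M_{uv})}{\sqrt2}\Big],
\end{equation*}
a real Gaussian series with deterministic coefficients. Its variance statistic is computed from these coefficients. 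The cross terms $M_{vu}M_{uv}^\dagger$ appear with coefficient $+\tfrac12$ from the $g$ part and $-\tfrac12$ from the $h$ part and cancel exactly, which is the real-variable form of the vanishing pseudo-variance. The variance statistic therefore equals $\max\{\lVert V_{\mathrm{row}}\rVert,\lVert V_{\mathrm{col}}\rVert\}$, and the tail bound of Theorem~\ref{thm:leakage}, including its dimensional $\log$ factor, applies with the same threshold.
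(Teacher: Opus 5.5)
Your proposal is correct and follows the paper's proof essentially step for step. Independence kills the distinct-pair cross terms, and the vanishing pseudo-variance $\mathbb E\,Z_{vu}^2=0$ removes the mixed $M_{vu}M_{uv}^\dagger$ terms. Resumming over ordered pairs recovers $V_{\mathrm{row}}$ and $V_{\mathrm{col}}$, and the real/imaginary splitting then feeds the matrix Gaussian tail bound with the same $\sigma^2$ and $D$. Your explicit real-series coefficients $(M_{vu}+M_{uv})/\sqrt2$ and $i(M_{vu}-M_{uv})/\sqrt2$ spell out what the paper only asserts, as does your remark that the diagonal terms vanish because $QE_+^{(u)}E_-^{(u)}P=0$.
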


\begin{proof}
Expanding
$\mathbb E\,X_HX_H^{\dagger}=\sum_{u<v}\sum_{u'<v'}\mathbb E\big[(Z_{vu}M_{vu}
+Z_{vu}^{*}M_{uv})(Z_{v'u'}^{*}M_{v'u'}^{\dagger}+Z_{v'u'}M_{u'v'}^{\dagger})\big]$,
distinct index pairs are independent and mean zero, so only $(u',v')=(u,v)$
survives. For each such pair the circular symmetry of the complex Gaussian gives
$\mathbb E|Z_{vu}|^2=1$ and $\mathbb E\,Z_{vu}^2=0$, so the mixed terms
$M_{vu}M_{uv}^{\dagger}$ and $M_{uv}M_{vu}^{\dagger}$ drop and
$\mathbb E\,X_HX_H^{\dagger}=\sum_{u<v}\big(M_{vu}M_{vu}^{\dagger}
+M_{uv}M_{uv}^{\dagger}\big)=\sum_{v\ne u}M_{vu}M_{vu}^{\dagger}=V_{\mathrm{row}}$,
the last equality because summing the ordered pair $\{(v,u),(u,v)\}$ over $u<v$
recovers the sum over all $v\ne u$. The identical computation gives
$\mathbb E\,X_H^{\dagger}X_H=V_{\mathrm{col}}$, whence
$\sigma_{X_H}^2=\max\{\lVert V_{\mathrm{row}}\rVert,\lVert V_{\mathrm{col}}\rVert\}
=\sigma_X^2=O(m)$ by Lemmas~\ref{lem:row}--\ref{lem:col}. Since $X_H$ is linear in
the Gaussians $Z_{vu}$, its Paulsen dilation is, after the same real/imaginary
splitting used for the independent model, a real Gaussian series of self-adjoint
matrices with this variance statistic and the same dilated dimension $D$; the
matrix Gaussian tail bound therefore yields the conclusion of
Theorem~\ref{thm:leakage} unchanged.
\end{proof}

\subsection{Spectral norm bound via concentration}
\label{subsec:concentration}

With the variance statistic of Eq.~\eqref{eq:sigma_final} in hand, we can state
and prove the central bound of the paper: the leakage operator norm concentrates
at $\tilde O(\sqrt n)$ rather than the worst-case $O(n)$. The bound is established
unconditionally in the Gaussian model; its statement for the physical Haar
ensemble, which is what the conjecture of Ref.~\cite{peropadre2017equivalence}
concerns, is conditional on the comparison input of
Conjecture~\ref{conj:transfer} below, and the framework extends to
arbitrary local dimension $d$.

\begin{theorem}[Leakage bound]
\label{thm:leakage}
For any fixed $\delta>0$ and $c>0$ there exists $c'>0$ such that, for the
Gaussian model $A_{vu}\to Z_{vu}/\sqrt m$, the leakage operator
norm obeys
\begin{equation}
    \Pr\!\left[\,\lVert QH_{\mathrm{Lie}}P\rVert
    \ge (1+\delta)\sqrt{n\,(\log m)^{1+c}}\,\right]
    \le \exp\!\big(-\Omega(n\,(\log m)^{c'})\big).
    \label{eq:theorem}
\end{equation}
In particular, in the regime $m=\tilde\Omega(n^3)$ one has
$\lVert QH_{\mathrm{Lie}}P\rVert \le \tilde O(\sqrt n)$ with probability
$1-1/\exp(\Omega(n))$. The same bound holds for the physical Haar ensemble
$A\in U(m)$ conditional on the Haar-to-Gaussian comparison of
Conjecture~\ref{conj:transfer}.
\end{theorem}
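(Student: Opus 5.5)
The plan is to apply the non-commutative matrix Gaussian tail bound directly to the decomposition~\eqref{eq:decomposition}. The two analytic inputs are the variance statistic of Lemmas~\ref{lem:row}--\ref{lem:col}, and a crude logarithmic count of the ambient dimension.

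\emph{Step 1: reduce to a real self-adjoint Gaussian series.} Write $Z_{vu}=(g_{vu}+i h_{vu})/\sqrt2$ with $g,h$ independent real standard Gaussians. Then
\[
X=\sum_{u\ne v}\Big(g_{vu}\tfrac{1}{\sqrt2}M_{vu}+h_{vu}\tfrac{i}{\sqrt2}M_{vu}\Big).
\]
Pass to the Hermitian (Paulsen) dilation
\[
\mathcal D(X)=\begin{pmatrix}0&X\\X^\dagger&0\end{pmatrix},
\]
which acts on $\mathcal H_Q\oplus\mathcal H_P$ and satisfies $\lVert\mathcal D(X)\rVert=\lVert X\rVert$. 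It is a real Gaussian series $\sum_k \gamma_k B_k$ with self-adjoint $B_k$. Its variance $\lVert\sum_k B_k^2\rVert$ equals $\max\{\lVert V_{\mathrm{row}}\rVert,\lVert V_{\mathrm{col}}\rVert\}=\sigma_X^2$, because the $1/2$ weights from the real and imaginary parts recombine to $\mathbb E|Z_{vu}|^2=1$. By Lemmas~\ref{lem:row}--\ref{lem:col}, in the dilute regime $m\ge n^2$ (certainly for $m=\tilde\Omega(n^3)$) we have $\sigma_X^2\le C_{\mathfrak g}m$, with $C_{\mathfrak g}=O(\gamma_{\mathfrak g}^2)$. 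The Hermitian ensemble is handled identically via Lemma~\ref{lem:hermitian}.

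\emph{Step 2: tail bound and dimension count.} The matrix Gaussian series inequality gives
\[
\Pr[\lVert X\rVert\ge t]\le 2D\,e^{-t^2/(2\sigma_X^2)},\qquad D=\dim\mathcal H_Q+\dim\mathcal H_P .
\]
Both sectors are sets of occupation vectors of $n$ excitations on $m$ sites, so $D\le 2\binom{m+n-1}{n}\le 2(e(m+n)/n)^n$. Hence $\log(2D)\le n\log m+O(n)$. This dimensional factor is where the polylogarithm enters, and it is the main obstacle: unlike a fixed-size random matrix, the ambient dimension is exponential in $n$. Consequently the bound can only reach $\sqrt{n\log m}$ up to slack, never a dimension-free $O(\sqrt n)$. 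I would absorb it by choosing the threshold with an extra $(\log m)^{c}$.

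\emph{Step 3: rescale and choose $t$.} Since $\lVert QH_{\mathrm{Lie}}P\rVert=\lVert X\rVert/\sqrt m$, set $t=(1+\delta)\sqrt{m\,n(\log m)^{1+c}}$. The exponent is
\[
\frac{t^2}{2\sigma_X^2}\ge\frac{(1+\delta)^2}{2C_{\mathfrak g}}\,n(\log m)^{1+c}.
\]
For $m$ large enough that $(\log m)^c\ge 4C_{\mathfrak g}$, this exceeds $2\log(2D)$. The probability is therefore at most $\exp(-\Omega(n(\log m)^{1+c}))$, which gives~\eqref{eq:theorem} with any $c'\le 1+c$; the constant $C_{\mathfrak g}$ is swallowed by the growing $(\log m)^c$. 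For bounded $m$, the remaining finitely many cases are covered by adjusting the implicit constant.

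\emph{Step 4: the particular case and Haar.} For $m=\tilde\Omega(n^3)$, $\log m=O(\log n)$, so the threshold is $\tilde O(\sqrt n)$ and the failure probability is at most $e^{-\Omega(n)}$. The Haar statement follows by replacing the independent Gaussian entries with the Haar (or Hermitian-generator) entries. That replacement is exactly the comparison input of Conjecture~\ref{conj:transfer}, which the proof would invoke rather than prove.
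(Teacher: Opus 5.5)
Your proposal is correct and follows essentially the paper's route: real/imaginary splitting of $Z_{vu}$, the Paulsen dilation, the variance statistic $\sigma_X^2=O(m)$ from Lemmas~\ref{lem:row}--\ref{lem:col} in the dilute regime, the matrix Gaussian tail bound with $\log D=O(n\log m)$, and the rescaling by $\sqrt m$, with the Haar case deferred to Conjecture~\ref{conj:transfer} through Lemma~\ref{lem:weingarten}. You are more explicit than the paper in showing how the extra $(\log m)^{c}$ absorbs $C_{\mathfrak g}$ and the factor $(1+\delta)$; the paper instead sets $t^2/(2\sigma_X^2)=(1+c')\log D$ and reads off an $O(\sqrt{n\log m})$ threshold with an unspecified constant.
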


The rescaled operator $X=\sqrt m\,QH_{\mathrm{Lie}}P$ of
Eq.~\eqref{eq:decomposition} maps the single-bunch subspace $\mathcal H_P$ into
the collision-free subspace $\mathcal H_Q$, which have different dimensions
$k_2=\dim\mathcal H_P$ and $k_1=\dim\mathcal H_Q$, so $X\in\mathbb C^{k_1\times
k_2}$ is rectangular. The matrix concentration inequality we use is stated for
self-adjoint matrices, so we first pass to the Paulsen dilation
\begin{equation}
    \tilde X = \begin{pmatrix} 0 & X \\ X^\dagger & 0 \end{pmatrix}
    \in \mathbb C^{D\times D}, \qquad D = k_1 + k_2,
    \label{eq:dilation}
\end{equation}
which preserves everything we need. It is self-adjoint by inspection, and its
eigenvalues are the signed singular values of $X$: writing the singular-value
decomposition $X=\sum_i s_i\,u_iv_i^\dagger$, the vectors $(u_i,\pm v_i)$ are
eigenvectors of $\tilde X$ with eigenvalues $\pm s_i$, so that
$\lVert\tilde X\rVert=\max_i s_i=\lVert X\rVert$ exactly, with no loss in the
reduction. Its square is block-diagonal,
$\tilde X^2=\mathrm{diag}(XX^\dagger,\,X^\dagger X)$, so that
$\mathbb E\,\tilde X^2=\mathrm{diag}(V_{\mathrm{row}},V_{\mathrm{col}})$ in the
notation of Eq.~\eqref{eq:variance_stat}, and the variance statistic of the
dilation is exactly the one already computed in
Section~\ref{subsec:decomposition},
\begin{equation}
    \lVert\mathbb E\,\tilde X^2\rVert
    = \max\{\lVert V_{\mathrm{row}}\rVert,\,\lVert V_{\mathrm{col}}\rVert\}
    = \sigma_X^2 = O(m).
    \label{eq:dilation_variance}
\end{equation}
The only other quantity entering the bound is the dimension $D$, and it enters
only logarithmically. Both subspaces consist of configurations of at most $n$
excitations among $m$ modes, so each is contained in the corresponding Fock
space, $\dim\mathcal H_Q,\dim\mathcal H_P\le\binom{m+n-1}{n}$, giving
\begin{equation}
    D \le 2\binom{m+n-1}{n} \le 2\,m^n,
    \qquad \log D = O(n\log m).
    \label{eq:logD}
\end{equation}

We can now apply the concentration inequality. By Eq.~\eqref{eq:decomposition},
$\tilde X=\sum_{u}\sum_{v\ne u}Z_{vu}\,\tilde M_{vu}$ is a sum of independent,
mean-zero Gaussian matrices, where $\tilde M_{vu}$ is the dilation
of the deterministic operator $M_{vu}$. The coefficients are complex,
$Z_{vu}\sim\mathcal{CN}(0,1)$; since the tail bound is stated for a real Gaussian
series with self-adjoint summands, we split each into its real and imaginary parts,
$Z_{vu}=(g_{vu}+ig'_{vu})/\sqrt2$ with $g_{vu},g'_{vu}$ independent real standard
normals, which recasts $\tilde X$ as a real Gaussian series with twice as many
self-adjoint terms and the same variance statistic $\sigma_X^2$ of
Eq.~\eqref{eq:dilation_variance}; the threshold and scaling below are therefore
unaffected. For such a sum the matrix Gaussian tail
bound~\cite{tropp2015introduction} gives, for any $t\ge0$,
\begin{equation}
    \Pr\big(\lVert X\rVert\ge t\big)
    = \Pr\big(\lVert\tilde X\rVert\ge t\big)
    \le D\,\exp\!\left(-\frac{t^2}{2\,\sigma_X^2}\right),
    \label{eq:tail}
\end{equation}
where the equality is Eq.~\eqref{eq:dilation} and the variance is
Eq.~\eqref{eq:dilation_variance}. The prefactor $D$ is exponentially large, so
the threshold must be chosen large enough for the exponent to overcome it. Setting
$t^2/(2\sigma_X^2)=(1+c')\log D$ for a constant $c'>0$, the right-hand side
becomes $D\cdot D^{-(1+c')}=D^{-c'}=\exp(-\Omega(n\log m))$, an exponentially
small failure probability, at the threshold
\begin{equation}
    t = \sqrt{2(1+c')\,\sigma_X^2\,\log D}
      = O\!\big(\sqrt{m\cdot n\log m}\big),
    \label{eq:threshold}
\end{equation}
where we used $\sigma_X^2=O(m)$ and $\log D=O(n\log m)$. Inverting the rescaling
$X=\sqrt m\,QH_{\mathrm{Lie}}P$ divides the threshold by $\sqrt m$, cancelling the
spatial-volume factor that the dilute scaling introduced,
\begin{equation}
    \lVert QH_{\mathrm{Lie}}P\rVert_{\mathrm{Gauss}}
    = \frac{\lVert X\rVert}{\sqrt m}
    \le \frac{O(\sqrt{m\cdot n\log m})}{\sqrt m}
    = O\!\big(\sqrt{n\log m}\big),
    \label{eq:gauss_norm}
\end{equation}
which is the $\tilde O(\sqrt n)$ of the theorem, holding with probability
$1-\exp(-\Omega(n\log m))$ over the Gaussian ensemble.

It is worth pausing on where the improvement over the worst-case bound comes
from, since the whole result turns on it. The factor $\sqrt n$ originates
entirely in the entropic dimension $\log D=O(n\log m)$ of the Fock space, while
the dominant $O(m)$ variance is cancelled when the rescaling is inverted. The
worst-case estimate $O(n)$ of Ref.~\cite{peropadre2017equivalence} amounts to
bounding $\lVert X\rVert$ by its largest possible value; the concentration
inequality replaces this by its typical value, and the gap between the two is
precisely the gap between $n$ and $\sqrt n$---equivalently, between
$m=\Omega(n^4)$ and $m=\tilde\Omega(n^3)$.

This bound was established for the independent Gaussian model
$A_{vu}\to Z_{vu}/\sqrt m$, whereas the physical transition matrix is Haar-random.
The two are not identical: while the sub-blocks of a Haar unitary converge to
independent complex Gaussians as $m\to\infty$, the leakage operator contracts up
to $k\sim n\log m$ matrix elements at once, and the constraint $U^\dagger U=I$
couples them, so the passage to the Haar ensemble requires a comparison of the two
operator norms. For unstructured submatrices such a comparison is the standard one
underlying boson sampling~\cite{aaronson2016bosonsampling,leverrier2018p}; for the
structured, projector-constrained operator here we reduce it to a single
structural input, formulated as Conjecture~\ref{conj:transfer} in
Appendix~\ref{app:weingarten} and verified there at leading order. Granting that
input, the comparison takes the form of the following statement.

\begin{lemma}[Haar-to-Gaussian comparison; conditional on Conjecture~\ref{conj:transfer}]
\label{lem:weingarten}
Let $\mathcal A_{\mathrm{Haar}}$ and $\mathcal A_{\mathrm{Gauss}}$ be the leakage
operator generated by the exact Haar ensemble and by the independent Gaussian
approximation, respectively. If Conjecture~\ref{conj:transfer} holds, then in the
dilute limit their operator norms satisfy
\begin{equation}
    \lVert \mathcal A_{\mathrm{Haar}}\rVert
    \le \lVert \mathcal A_{\mathrm{Gauss}}\rVert
        \left(1 + \tilde O\!\left(\tfrac{n^4}{m^2}\right)\right).
    \label{eq:weingarten}
\end{equation}
\end{lemma}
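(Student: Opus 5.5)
The comparison goes through moments of the dilation, not through the norm directly. Write $\tilde X_{\mathrm{Haar}}$ and $\tilde X_{\mathrm{Gauss}}$ for the Paulsen dilations, Eq.~\eqref{eq:dilation}, of the rescaled leakage operators. They are built from the same deterministic combinatorial operators $\tilde M_{vu}$ of Eq.~\eqref{eq:decomposition}; only the scalar weights differ, being $\sqrt m\,A_{vu}$ with $A$ Haar-distributed in one case and $Z_{vu}$ in the other. I would control the norm through the trace moments $\mathbb E\,\mathrm{tr}\,\tilde X^{2k}$ at order $k\sim n\log m$. Since $\|\tilde X\|^{2k}\le \mathrm{tr}\,\tilde X^{2k}\le D\,\|\tilde X\|^{2k}$ and $\log D=O(n\log m)$ by Eq.~\eqref{eq:logD}, at this order the moments determine the norm up to a factor $D^{1/2k}=O(1)$. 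That factor is absorbed into the $(1+\delta)$ slack already present in Theorem~\ref{thm:leakage}. The ratio of operator norms is then controlled by the $2k$-th root of the ratio of moments. It therefore suffices to show
\begin{equation*}
\mathbb E_{\mathrm{Haar}}\,\mathrm{tr}\,\tilde X^{2k}\le \mathbb E_{\mathrm{Gauss}}\,\mathrm{tr}\,\tilde X^{2k}\,\bigl(1+\tilde O(n^4/m^2)\bigr)^{2k}.
\end{equation*}

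Expanding both traces gives sums over words $(v_1u_1,\dots,v_{2k}u_{2k})$. Each word has a deterministic weight $\mathrm{tr}\,\tilde M_{v_1u_1}\cdots\tilde M_{v_{2k}u_{2k}}$, multiplied by the scalar moment $\mathbb E\prod_j \sqrt m A_{v_ju_j}^{(*)}$. On the Gaussian side Wick's theorem keeps only perfect pairings of each $A$ with an $A^*$ carrying the same index pair, each with weight $1$. On the Haar side the Weingarten formula gives
\begin{equation*}
m^{k}\sum_{\sigma,\tau\in S_k}\delta(\cdot)\,\mathrm{Wg}(\sigma\tau^{-1},m),
\end{equation*}
where $\mathrm{Wg}(e,m)=m^{-k}(1+O(k^2/m))$ and the non-identity permutations are suppressed by $m^{-|\sigma\tau^{-1}|}$. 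I would split the Haar sum into three parts. The first is the identity-Weingarten, Gaussian-matching pairings, which reproduce the Gaussian moment up to the factor $1+O(k^2/m)$. The second is the same-index contributions, which the Haar measure allows but Wick's theorem weights differently. The third is the genuine cross terms $\sigma\tau^{-1}\ne e$, which pair an $A_{vu}$ with an $A^*_{v'u'}$ having mismatched indices.

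Every cross term carries an extra $1/m$ per transposition. It also requires a nonvanishing deterministic trace of a word in which indices are permuted. Because each $M_{vu}$ is projector-constrained ($u$ must sit at occupation $d$ for $P$, and the move must land in $\mathcal H_Q$), such a word is nonzero only when the swapped indices lie among the $O(n)$ occupied sites. One factor of $n^2$ comes from this locality constraint and another from the pairings, which gives a relative size $O(n^4/m^2)$ per elementary swap. I would then resum over the swaps as a geometric series in $k\,n^4/m^2$, again using the smallness of the Weingarten tail. Taking $2k$-th roots at $k\sim n\log m$ turns the accumulated factor into $1+\tilde O(n^4/m^2)$, which is Eq.~\eqref{eq:weingarten}.

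The main obstacle is the uniform bound on the cross-term words. I need the deterministic trace of an index-permuted word to be dominated by a Gaussian-paired word times $\tilde O(n^4/m^2)$ per swap, uniformly over all $k\sim n\log m$. This is exactly the structural statement isolated as Conjecture~\ref{conj:transfer}, and I would invoke it at this step rather than prove it. To show the plan is consistent with that conjecture, I would verify the single-swap, leading-order case explicitly. This amounts to computing $\mathrm{tr}$ of a word with one transposition, using the diagonal forms Eqs.~\eqref{eq:row_diag}--\eqref{eq:col_diag} for the row and column variance operators, and checking that it is $O(n^2)$ times the paired trace. Everything else in the argument follows from the standard Weingarten asymptotics.
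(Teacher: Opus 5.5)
\paragraph{The step you defer is not Conjecture~\ref{conj:transfer}.} Your outer scaffolding matches the paper's: trace moments at $k\sim\log D=O(n\log m)$, the Weingarten versus Wick expansions, and a diagonal term weighted by $\mathrm{Wg}(e,m)$. The central step, however, does not go through as written. The conjecture is a pure loop-count inequality, Eq.~\eqref{eq:appB_decorated}: a term with $\pi=\sigma^{-1}\tau$ at distance $|\pi|=\ell$ and genus $g$ is suppressed relative to the diagonal by $m^{-2(\ell-g)}$. It contains no factor of $n$, no locality statement, and no per-swap bound.

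In the paper the $n^4$ is not a locality factor at all. It is $k^4$ at $k\sim n\log m$, and it comes from counting permutations, which is the idea your plan is missing. There are $\lesssim k^{2\ell}/(2^\ell\ell!)$ permutations at distance $\ell$. Genus lets a distance-$2h$ term cost only $m^{-2h}$. Grouping by $h=\ell-g$ then resums to $\cosh(\sqrt{Ck^4/m^2})-1=O(k^4/m^2)$, Eq.~\eqref{eq:appB_sum}. The dominant correction is therefore the $\ell=2$, $g=1$ term, a distance-two permutation costing only $m^{-2}$. Your single-transposition check never sees this term, and a model in which each swap contributes an independent multiplicative factor does not capture it. Your relaxed target, a moment ratio of at most $(1+\tilde O(n^4/m^2))^{2k}$, follows at once from this $1+\tilde O(n^4/m^2)$ bound. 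The repair is to replace your locality heuristic by this genus-organized count.

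\paragraph{Your substitute mechanism also fails on its own terms.}
\begin{itemize}
\item The claim that index-permuted words are supported only on the $O(n)$ occupied sites is not justified. The creation index $v$ of $M_{vu}$ ranges over the $O(m)$ empty modes, which is precisely why $\lVert V_{\mathrm{col}}\rVert=O(m)$. This heterogeneity of loop weights is exactly what the paper identifies as the obstruction to proving the conjecture.
\item Even granting a per-swap factor $x=\tilde O(n^4/m^2)$, there are $\binom{k}{2}$ transposition positions, not $O(k)$. The first-order moment correction is then of order $k^2x$, and its $2k$-th root is $1+\tilde O(kx)=1+\tilde O(n^5/m^2)$, a factor $n$ worse than claimed. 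Your geometric series in $k\,n^4/m^2$ silently assumes only $O(k)$ swaps.
\item There is also a slip in the Weingarten asymptotics: $\mathrm{Wg}(e,m)=m^{-k}(1+O(k^2/m^2))$, not $m^{-k}(1+O(k^2/m))$. With your version the diagonal term alone would contribute $\tilde O(n^2/m)$. That exceeds even your relaxed allowance of $\tilde O(n^5/m^2)$ once $m\gg n^3$.
\end{itemize}
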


The route to such a comparison is visible in the moments. The operator norm is
reached through the high moments $\mathbb E\,\mathrm{Tr}\big((XX^\dagger)^k\big)$
at order $k\sim n\log m$, each a Haar integral of degree $2k$ in the entries of
$U$. The Weingarten formula evaluates such an integral as a sum over pairs of
permutations $\sigma,\tau\in S_k$ weighted by $\mathrm{Wg}(\sigma^{-1}\tau,m)$; the
diagonal terms $\sigma=\tau$ reproduce the Gaussian (Wick) moment, and every
genuine unitary correlation sits in the off-diagonal terms. Conjecture~\ref{conj:transfer}
asserts that, when these terms are organized by genus, the projector decorations
$P,Q$ preserve the topological power counting, so that a contribution at distance
$\ell$ and genus $g$ is suppressed by $m^{-2(\ell-g)}$ relative to the diagonal and
the off-diagonal sum is controlled by its leading $\ell=2$, $g=1$ term,
$\tilde O(n^4/m^2)$. We verify this at leading order in
Appendix~\ref{app:weingarten}, but emphasize that the general statement is not
established: the projector constraints make the index-loop weights heterogeneous
(some loops are pinned to the $O(n)$ occupied sites, others range over the $O(m)$
empty modes, as already reflected in the variance asymmetry
$\lVert V_{\mathrm{row}}\rVert=O(n^2)$ versus $\lVert V_{\mathrm{col}}\rVert=O(m)$
of Lemmas~\ref{lem:row}--\ref{lem:col}), and whether the uniform genus ordering
survives this heterogeneity for all $\ell$ is open.

We therefore record the logical status precisely. The Gaussian-model bound,
Theorem~\ref{thm:leakage} with $A_{vu}\to Z_{vu}/\sqrt m$, is unconditional. Its
transfer to the physical Haar ensemble holds in the operating regime
$m=\tilde\Omega(n^3)$ conditional on Conjecture~\ref{conj:transfer}; this is the
sole analytical gap between the present analysis and a full resolution of the
conjecture of Ref.~\cite{peropadre2017equivalence}, the resulting
evidence---a complete proof in the Gaussian ensemble together with a leading-order
verification of the Haar transfer---being strictly stronger than the small-size
numerics ($n\le 6$) that originally motivated the conjecture.

\subsection{The mode-scaling law}
\label{subsec:scaling}

We now combine the leakage bound of Theorem~\ref{thm:leakage} with the collision
amplitude of Eq.~\eqref{eq:collision_amp} to obtain the mode count required for
faithful emulation. Within the perturbative framework of
Section~\ref{subsec:perturbative}, the total variation distance between the
finite-dimensional sampling distribution and the ideal one is bounded, in the
occupation basis, by the state-vector deviation $\lVert\delta(\tau)\rVert$ of
Eq.~\eqref{eq:delta_bound} (the trace-distance factor absorbed into the $O(1)$
prefactor),
\begin{equation}
    \varepsilon \;\le\; \tau\,\lVert QH_{\mathrm{Lie}}P\rVert\,
    \max_{t}\lVert P\lvert\phi(t)\rangle\rVert,
    \label{eq:tvd_product}
\end{equation}
the product of the evolution time $\tau$, the leakage operator norm, and the peak
collision amplitude. Here $\lVert QH_{\mathrm{Lie}}P\rVert$ is the operator-norm
bound of Theorem~\ref{thm:leakage} and
$\max_t\lVert P\lvert\phi(t)\rangle\rVert$ is the peak collision amplitude of
Eq.~\eqref{eq:collision_amp}---the same quantity, since $P$ annihilates
$\mathcal H_Q$ and hence $P\lvert\phi(t)\rangle=P\lvert\epsilon(t)\rangle$; the two factors share the single-bunch subspace
$\mathcal H_P$ but are distinct quantities---the former a spectral norm from
matrix concentration, the latter a state amplitude from the birthday-paradox
counting---and combine multiplicatively in the Dyson bound
\eqref{eq:delta_bound}. Substituting the operator-norm bound
$\lVert QH_{\mathrm{Lie}}P\rVert=\tilde O(\sqrt n)$ from
Theorem~\ref{thm:leakage}, the collision amplitude
$\lVert P\lvert\phi(t)\rangle\rVert = O\big(n^{d/2}/m^{(d-1)/2}\big)$ from
Eq.~\eqref{eq:collision_amp}, and the constant evolution time $\tau=O(1)$
discussed in Section~\ref{subsec:perturbative}, gives
\begin{equation}
    \varepsilon \;\le\; O(1)\cdot \tilde O(\sqrt n)\cdot
    O\!\left(\frac{n^{d/2}}{m^{(d-1)/2}}\right)
    = \tilde O\!\left(\frac{n^{(d+1)/2}}{m^{(d-1)/2}}\right).
    \label{eq:tvd_scaling}
\end{equation}
Requiring the error to vanish asymptotically, $\varepsilon\to 0$, fixes the
mode-scaling law: the denominator must outgrow the numerator, which holds when
\begin{equation}
    m = \tilde\Omega\!\left(n^{\,1+\frac{2}{d-1}}\right).
    \label{eq:scaling_law}
\end{equation}
This establishes Theorem~\ref{thm:main}. It states the spatial overhead---the
number of modes per excitation---needed to keep a finite-dimensional,
matter-based emulator within the regime where its output distribution coincides
with that of boson sampling, and hence inherits the same hardness.

The exponent is controlled entirely by the local dimension $d$, and two cases are
of direct experimental relevance. For hard-core qubits, $d=2$, the law gives
\begin{equation}
    m = \tilde\Omega(n^3),
    \label{eq:d2}
\end{equation}
which is the bound conjectured by Ref.~\cite{peropadre2017equivalence}, proved
here unconditionally in the Gaussian model and, for the physical Haar ensemble,
conditional on Conjecture~\ref{conj:transfer}; it improves on the previously established $m=\Omega(n^4)$ by the factor of
$n$ that separates the worst-case operator norm $O(n)$ from the typical norm
$\tilde O(\sqrt n)$. For a spin-$1$ local representation, $d=3$, the exponent
drops to
\begin{equation}
    m = \tilde\Omega(n^2),
    \label{eq:d3}
\end{equation}
which coincides with the collision-free threshold $m\gg n^2$ of the dilute limit
itself. At $d=3$ the physical truncation boundary and the dilution requirement
align: no spatial overhead beyond what dilution already demands is needed to
suppress bunching leakage. Higher local dimensions $d\ge 4$ reduce the exponent
further toward the linear algorithmic baseline, but $d=3$ is the point at which
the finite-dimensional constraint ceases to be the binding one, which makes it
the most resource-efficient setting for a finite-level demonstration.

The status of this optimum deserves comment, since it is tied to the hardness
conjecture the emulation inherits. The target reproduced by the finite-dimensional
sampler is the collision-free boson-sampling distribution, whose classical
hardness is established only in the dilute regime $m=\Omega(n^2)$
\cite{Aaronson2011,aaronson2016bosonsampling,bouland2026linear}. Within that
frame---the one in which the conjecture of Ref.~\cite{peropadre2017equivalence}
was posed, before the collisional regime had been studied---the collision-free
floor $n^2$ is itself a lower bound on the mode count, and $d=3$ is optimal: it is
the local dimension at which the truncation floor $n^{1+2/(d-1)}$ first meets
$n^2$, so that for $d\ge4$ the truncation ceases to bind while the collision-free
requirement does not. The $\tilde O(\sqrt n)$ concentration of
Theorem~\ref{thm:leakage} is what places this crossover at $d=3$ rather than at
the $d=4$ that the worst-case norm $O(n)$ would give. Whether higher $d$ can be
exploited depends on the hardness of boson sampling in the collisional regime
$m=o(n^2)$, an open question~\cite{grier2022bipartite,go2026shallow} that
postdates Ref.~\cite{peropadre2017equivalence} and lies outside its scope: should
collisional sampling prove hard, the leakage-only requirement
$m=\tilde\Omega(n^{1+2/(d-1)})$ would govern and approach the quasilinear regime as
$d$ grows, so that the higher-spin construction of Ref.~\cite{kam2025beyond}---whose
bound uses no worst-case operator norm---would become the operative one; absent
that, $d=3$ remains optimal.

Two remarks delimit the result. First, the law \eqref{eq:scaling_law} concerns
the \emph{spatial} resource: it counts modes, taking the evolution time as
$\tau=O(1)$. This is justified when the single-particle transition matrix $A$ can
be generated in time independent of system size, which requires the hardware to
support fast, effectively non-local coupling; on platforms restricted to local
interactions the evolution time grows with the array, and the bound must be
re-read with $\tau$ restored. We take this up in
Section~\ref{sec:experiment}. Second, the law inherits the regime of validity of
Theorem~\ref{thm:leakage}: the concentration bound and the Haar-to-Gaussian
transfer hold in the dilute regime, and the scaling \eqref{eq:scaling_law} is
stated within it. For $d=2$ the required $m=\tilde\Omega(n^3)$ coincides with that
regime, so the law is self-consistent at the boundary it predicts.

\subsection{Tightness and the lower bound}
\label{subsec:tightness}

The bound of Theorem~\ref{thm:leakage} is an upper bound:
$\lVert QH_{\mathrm{Lie}}P\rVert \le \tilde O(\sqrt n)$ with high probability.
Whether it is tight---whether a matching lower bound
$\lVert QH_{\mathrm{Lie}}P\rVert \ge \Omega(\sqrt n)$ holds---determines whether
the mode-scaling law \eqref{eq:scaling_law} is optimal or merely sufficient. We
have not established the lower bound, and we explain here why it lies beyond the
methods used above, and what a proof would require.

The obstruction is that the leakage operator resists the techniques that usually
deliver spectral lower bounds. For unstructured random ensembles the typical
largest singular value is pinned by universality results---the Marchenko--Pastur
edge, Tracy--Widom fluctuations---which assume independent or weakly dependent
entries. The leakage operator has neither: its entries are determined by the
$O(m^2)$ parameters of $A$ and are strongly correlated, and the projectors $P,Q$
impose severe combinatorial constraints that make it sparse and rank-deficient.
The concentration inequality we used is therefore one-sided by nature; it bounds
the norm from above through the variance statistic, but the same structure that
makes the variance small offers no lower bound on the extreme singular value.

A lower bound would have to come from the geometry of the operator rather than
from a global average. The natural object is the bipartite support graph of
$QH_{\mathrm{Lie}}P$ between the collision-free sector $\mathcal H_Q$ and the
single-bunch sector $\mathcal H_P$, whose connectivity is dictated by the
Lie-algebraic hopping. This graph is strongly asymmetric. From a vertex in
$\mathcal H_P$---a configuration with one bunched site---the out-degree is $O(m)$,
reflecting the empty modes into which the bunched excitation can scatter; from a
vertex in $\mathcal H_Q$ the in-degree is only $O(n^2)$, the number of pairings
that can form a collision from the dilute gas. This asymmetry concentrates the
operator's weight on a sparse set of high-degree hubs---star-graph motifs
embedded in the exponentially large Fock space. A lower bound on the spectral
radius could plausibly be extracted by constructing trial vectors localized on
these hubs and evaluating the Rayleigh quotient, or by a high-moment trace method
sensitive to the hub structure; we leave this to future work.

The same gap is the origin of the polylogarithmic factor in our upper bound. The
$\tilde O(\sqrt n)$ is reached through the moment method at order
$k\sim n\log m$, forced by the entropic dimension of the Fock space; the
$\log$ factors are an artifact of accessing the norm through high moments rather
than through the spectral edge directly. A method that controlled the edge
without passing through the $k$-th moment---an edge-universality or local-law
statement adapted to the sparse, correlated structure---would remove the
polylog and decide the lower bound at once. Both the optimality of the exponent
and the removal of the polylog therefore reduce to the same open problem: a
direct spectral analysis of structured, Fock-space leakage operators. Until it
is resolved, the result stands as a near-optimal upper bound: optimal up to a
polylogarithmic factor in the mode count, and sufficient to place
finite-dimensional emulation within the hardness regime.

The numerical results of the next subsection bear directly on this question:
the measured Haar-ensemble norm matches the closed form $\sqrt{d(n-d+1)}$ to
sub-percent accuracy across $d=2$--$5$, indicating that the hub Rayleigh
quotient survives the phase averaging and that
$\lVert QH_{\mathrm{Lie}}P\rVert=\Theta(\sqrt n)$ at fixed $d$; a variational
proof organized around the coherent $(n-d+1)$-fold donor sum, with the factor
$d$ supplied by the unitarity-pinned column weight identified there, is the
natural target.

\subsection{Numerical verification and a sharp-constant conjecture}
\label{subsec:numerics}

The bounds above can be tested directly at moderate sizes by exact
construction of the leakage operator. The single-bunch sector $\mathcal{H}_P$
and the reached portion of the truncation sector $\mathcal{H}_Q$ are
enumerated exactly, and $X=QH_{\mathrm{Lie}}P$ is assembled as a sparse matrix
whose combinatorial structure (the operators $M_{vu}$) is built once per
$(n,m,d)$ and refilled per realization of the coupling matrix; the
construction is validated against a brute-force diagonalization of the full
fixed-$N$ Fock-space Hamiltonian projected onto the $Q$ and $P$ sectors, the
sparse and dense norms agreeing to ten significant figures at all six test points
(Table~\ref{tab:validation}); the largest singular value is obtained by dense SVD
below dimension $300$ and by sparse SVD above it. The complete parameter grid,
with sector dimensions and dilution status, is given in Table~\ref{tab:grid}. We compare four coupling ensembles,
all normalized to entry variance $1/m$: the independent Gaussian model of
Eq.~\eqref{eq:decomposition}; its Hermitian counterpart, with the upper
triangle drawn independently; the entries of a Haar unitary, the ensemble of
Conjecture~\ref{conj:transfer}; and the Hermitian generator $i\log U$ of a
Haar unitary, the physically synthesized coupling. All runs use deterministic
per-point seeding and are reproducible from a single master seed. The largest-$n$
points of the $d=2$ and $d=3$ $n$-scans are sub-dilute ($m<2n^2$, i.e.\ from
$n=5$ onward; Table~\ref{tab:grid}) and appear as open markers in
Fig.~\ref{fig:numerics}, lying below the dilute-limit curve in the direction that
strengthens the bound.

Three findings result, summarized in Fig.~\ref{fig:numerics}. First, the
scaling of Theorem~\ref{thm:leakage} is confirmed and sharpened. Across
$3\le n\le 8$ at $d=2$ the fitted log-log slope of the median norm against $n$
is $0.55$ for the Haar ensemble and $0.55$--$0.68$ across all four ensembles,
decisively excluding the worst-case linear scaling. (A fitted slope near $0.55$
is itself consistent with the $\sqrt n$ law rather than in tension with it: the
effective log-log slope of the conjectured form $\sqrt{2(n-1)}$ over this same
window $3\le n\le8$ is $\approx0.56$, so the small excess above $1/2$ reflects the
finite-$n$ $-d+1$ offset, not a genuinely super-square-root exponent.) More
precisely, the
Haar-ensemble medians---whose realization-to-realization fluctuations are of
order $10^{-3}$, itself a signature of the unitary constraint discussed
below---are reproduced, in the dilute regime, by the closed form
\begin{equation}
    \lVert QH_{\mathrm{Lie}}P\rVert_{\mathrm{Haar}}
    \;=\; \sqrt{d\,(n-d+1)}\;\bigl(1+o(1)\bigr),
    \label{eq:sharp_constant}
\end{equation}
verified at $d=2,3,4,5$: the deviation is sub-percent throughout the dilute
regime---reaching $+0.03\%$ at $(d,n,m)=(2,3,384)$ and $4\times10^{-4}$
relative at $(3,5,30)$---and at $d\ge3$ remains sub-percent even at moderately
sub-dilute sizes. The sign structure of the residuals (from above at the
smallest $n$, from below at larger $n$, with a non-monotone crossover at
$n=4$, $d=2$) indicates competing subleading corrections whose form we do not
attempt to fix; what the data establish is the leading term. We record
Eq.~\eqref{eq:sharp_constant} as a conjectured sharp asymptotic. Its two
factors have a mechanistic reading that connects directly to
Section~\ref{subsec:tightness}: the factor $d$ is the unitarity-pinned column
weight---for a column of $X$ labelled by a bunch at site $u_0$, the squared
norm is $d\sum_{v\,\mathrm{admissible}}(n_v+1)\,\lvert U_{vu_0}\rvert^2\to d$
in the dilute limit, fixed by the normalization of the $u_0$-th column of $U$
rather than fluctuating---and the factor $n-d+1$ is the number of excitations
available to complete a $d$-fold bunch from a site holding $d-1$, i.e.\ the
in-degree of the hub structure, entering coherently. The numerics thus answer
the question left open in Section~\ref{subsec:tightness}: the hub Rayleigh
quotient is not destroyed by interference among the $O(m)$ outgoing
amplitudes, the norm is $\Theta(\sqrt n)$ at fixed $d$, and the
polylogarithmic factor in Theorem~\ref{thm:leakage} is an artifact of the
moment method, absent from the actual spectral edge.

Second, the Haar-to-Gaussian comparison of Lemma~\ref{lem:weingarten} is
consistent in sign and trend, though not in magnitude: at the accessible sizes
the moment order $k\sim n\log m$ gives $k^4/m^2\gtrsim1$, so the numerics probe
only the sign of the correction, not the predicted $\tilde O(n^4/m^2)$ scale. At
every size tested the Haar norm lies
\emph{below} the Gaussian one---the direction of the one-sided inequality
Eq.~\eqref{eq:weingarten}---and the ratio increases monotonically toward unity
as $m$ grows at fixed $n$, from $0.81$ at $(n,m)=(3,12)$ to $0.91$ at
$(3,192)$. The mechanism is again the unitary constraint: the column
normalization $\sum_v\lvert U_{vu_0}\rvert^2=1$ removes the upper tail of the
Gaussian column-norm fluctuations, so the Gaussian model over-estimates the
Haar norm, and the mode-scaling law derived from it is, if anything,
conservative.

Third, the Hermitian constraint on the coupling matrix is verified to be
immaterial at the level of the norm: the independent-entry and Hermitian
Gaussian ensembles agree within their interquartile ranges at every grid
point, as do the Haar-entry and Hermitian $i\log U$ ensembles, confirming the
reduction stated in Section~\ref{sec:framework}.

\begin{figure*}[t]
  \centering
  \includegraphics[width=\textwidth]{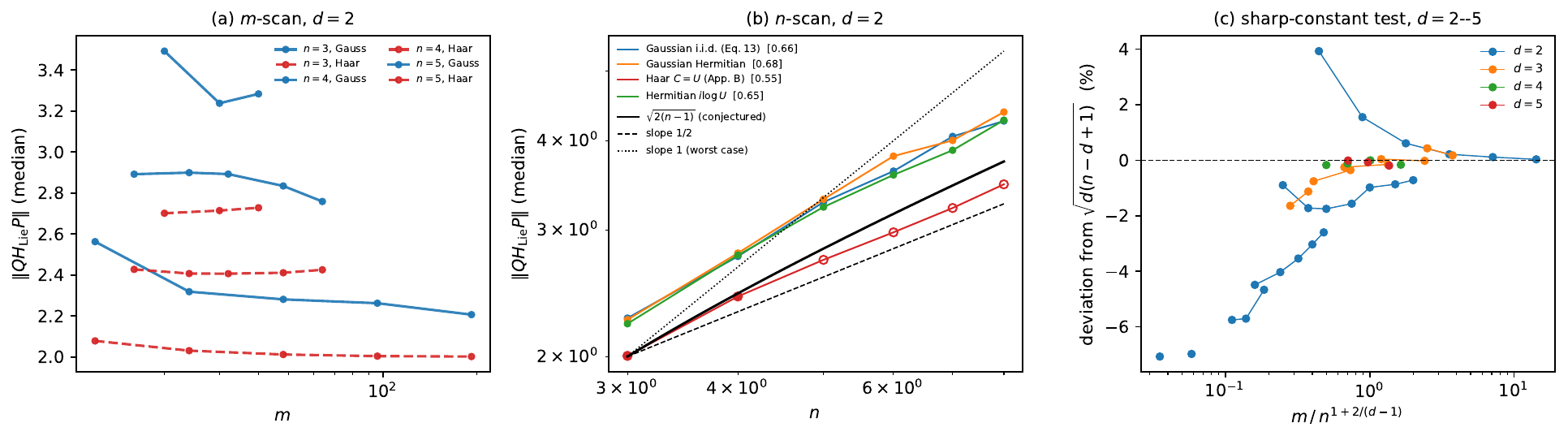}
  \caption{Numerical verification of the leakage bound. \textbf{(a)}~Median
  leakage norm against mode number $m$ at fixed $n$ ($d=2$): the Haar-ensemble
  norm (dashed) is nearly $m$-independent and tightly concentrated, while the
  Gaussian model (solid) drifts toward it from above. \textbf{(b)}~Median norm
  against particle number $n$ ($d=2$, log-log) for all four coupling
  ensembles; guide lines show slope $1/2$ (Theorem~\ref{thm:leakage}) and
  slope $1$ (the worst-case estimate of
  Ref.~\cite{peropadre2017equivalence}); the black curve is the conjectured
  sharp asymptotic $\sqrt{d(n-d+1)}$ of Eq.~\eqref{eq:sharp_constant}. Open
  markers denote sub-dilute Haar points ($m<2n^2$), which fall below the
  dilute-limit curve, in the direction that strengthens the bound.
  \textbf{(c)}~Relative deviation of the Haar median from $\sqrt{d(n-d+1)}$
  for $d=2,3,4,5$, against the dilution parameter $m/n^{1+2/(d-1)}$; each line
  is one $(d,n)$ series in $m$: sub-percent agreement throughout the dilute
  regime, converging with $m$, with visibly tighter agreement at larger $d$.
  Fluctuations over independent realizations are smaller than the markers for
  the Haar ensemble; seeds are deterministic per grid point.}
  \label{fig:numerics}
\end{figure*}

\section{Physical Realization and the Connectivity Bottleneck}
\label{sec:experiment}

The mode-scaling law of Section~\ref{subsec:scaling} rests on a premise intrinsic
to analog Hamiltonian emulation: the evolution time needed to generate the
sampling distribution is a constant, $\tau=\mathcal O(1)$. This is justified
because generalized boson sampling requires only single-particle scrambling, not
full many-body Haar randomness---a dense, sufficiently mixed single-particle
matrix $A$ produces the required interference in a time independent of system
size. On hardware that natively supports a fully connected coupling matrix, this
target is met directly (for instance $\tau=\pi/2$), and the spatial overhead is
then dictated purely by the local dimension $d$, making
$m=\tilde\Omega(n^{1+2/(d-1)})$ a standalone bottleneck.

Translating this constant-time premise to restricted hardware topologies is where
the difficulty lies, because it sets the algorithmic target time against the
physical scrambling time. For a simulator constrained to one- or two-dimensional
local interactions, the finite propagation speed of information---bounded by the
Lieb-Robinson velocity $v_{LR}$~\cite{Richerme2014}---means global scrambling
cannot occur instantaneously. Reproducing the dense algorithmic transitions then
requires a physical time $\tau_{\mathrm{phys}}$ proportional to the diameter of
the interaction graph: $\tau_{\mathrm{phys}}=\Omega(m)$ for a 1D lattice,
$\Omega(\sqrt m)$ for a 2D planar array. On a platform with non-local
connectivity the two coincide, $\tau_{\mathrm{phys}}=\tau=\mathcal O(1)$, whereas
on a local one $\tau_{\mathrm{phys}}$ grows with the diameter and replaces the
constant $\tau$ in the bound. Because the leakage dynamics of the
Dyson series run throughout the physical simulation, this geometry-dependent
duration must be substituted back into the error bound,
$\varepsilon\le\tau_{\mathrm{phys}}\,\lVert QH_{\mathrm{Lie}}P\rVert\,
\max_t\lVert P\lvert\phi(t)\rangle\rVert$. The extended evolution then provides a
window for the state to leak into the bunched sectors, and for finite local
dimension ($d\in\{2,3\}$) the truncation error is no longer suppressed by spatial
dilution---it grows with the macroscopic scrambling duration.

This disqualifies these architectures for \emph{analog, continuous-time}
emulation of the sampling dynamics (a digital compilation into swap networks is a
separate question, not addressed here), such
as one-dimensional spin chains or strictly planar superconducting grids with
local capacitive couplings. A viable architecture must instead possess non-local
connectivity: the interaction graph must support long-range or all-to-all
couplings that decouple the physical scrambling time from the system size. We
evaluate state-of-the-art platforms through this single criterion, the
combination of finite-dimensional fidelity and non-local connectivity.

\subsection{The trade-off: deterministic SPAM versus topological penalty}
\label{subsec:tradeoff}

The criterion is best understood as a trade-off against the photonic baseline. In
optical boson sampling the medium is itinerant photons, whose local Hilbert space
is infinite ($d\to\infty$), so truncation errors are absent. This immunity lets
the architecture reach the dense algorithmic baseline of linear boson sampling
with minimal spatial resources, $m=\Theta(n)$ (e.g.\ $m\ge 2.1n$~\cite{bouland2026linear}), and remain
protected from error growth during the $\mathcal O(m)$-depth interferometric mesh,
since the leakage rate is identically zero. The cost lies elsewhere: probabilistic
generation of large $n$-photon input states, and the exponential sensitivity of
itinerant photons to optical loss in deep interferometers, which together limit
the fidelity and verifiability of the output.

Finite-dimensional matter-based emulation resolves this by moving the
computational degrees of freedom from flying photons to stationary elements,
yielding near-deterministic state preparation and measurement (SPAM):
configuration initialization and number-resolved readout proceed with high
fidelity, bypassing probabilistic generation and mitigating particle loss. The
payment is spatial and topological. A finite local dimension $d$ means that any
$d$-fold bunching event breaks unitarity, so the density must be diluted, raising
the mode count to $m=\tilde\Omega(n^{1+2/(d-1)})$; and because the coupling matrix
is synthesized over time, suppressing leakage demands fast-scrambling interaction
graphs that keep the physical duration short. The pursuit of finite-dimensional
emulation thus trades the problem of synthesizing indistinguishable particles for
that of fabricating highly connected graphs to house diluted excitation networks.
Viewed through this dual constraint, platforms optimized for local operations
(standard planar superconducting grids with nearest-neighbour couplings) are
misaligned with the task, while layouts that natively support long-range or
all-to-all interactions can saturate the optimal scaling.

\subsection{Trapped-ion crystals: all-to-all connectivity}
\label{subsec:ions}

Trapped atomic ions in radio-frequency Paul traps or static Penning traps are an
exceptionally clean platform for finite-dimensional sampling~\cite{blatt2012quantum}. Isolating the
internal hyperfine or optical-clock states of isotopes such as
$^{171}\text{Yb}^+$ or $^{40}\text{Ca}^+$ gives stationary qudit registers with
SPAM fidelities routinely exceeding $99.9\%$~\cite{harty2014highfidelity,wang2021single}, bypassing the probabilistic
generation and transmission losses of photonic networks. The multi-level Zeeman
or hyperfine manifolds natively accommodate higher local dimensions: the
$|F=1,m_F=-1,0,1\rangle$ ground manifold of $^{171}\text{Yb}^+$ encodes a spin-1
($d=3$) representation via multi-tone microwave or Raman driving. The $m_F=\pm1$
states are susceptible to linear Zeeman shifts, but this can be protected against
magnetic dephasing by continuous dynamical decoupling or multi-ion entangled
encodings.

The decisive advantage of this platform is that it bypasses the Lieb-Robinson
constraint through phonon-mediated long-range interactions. Spin-dependent optical
dipole forces detuned from the collective transverse motional modes synthesize an
effective long-range $XY$ hopping Hamiltonian~\cite{porras2004effective,monroe2021programmable}
$H_{\mathrm{Lie}}=\sum_{i<j}J_{ij}(E^{(j)}_+E^{(i)}_-+\mathrm{H.c.})$, with
\begin{equation}
    J_{ij} = \sum_p \frac{\Omega_i\Omega_j\,\eta^p_i\eta^p_j\,\omega_p}
                       {\mu^2-\omega_p^2}
    \approx \frac{J_0}{|R_i-R_j|^\alpha},
    \label{eq:ion_coupling}
\end{equation}
where $\Omega_i$ is the single-ion Rabi frequency, $\eta^p_i$ the Lamb--Dicke
parameter coupling ion $i$ to the $p$-th motional mode $\omega_p$, and $\mu$ the
Raman detuning. The exponent $\alpha$ is continuously tunable~\cite{britton2012engineered} between $0$ and $3$
through $\mu$: operating near the centre-of-mass mode sends $\alpha\to0$,
collapsing the coupling to a fully connected, all-to-all graph with diameter
$\mathrm{diam}(G)=1$. The physical scrambling time is then decoupled from the mode
count and matches the algorithmic target $\tau=\mathcal O(1)$. Isolating the
centre-of-mass mode without off-resonantly exciting higher spatial modes requires
keeping the base coupling weak, $J_0/2\pi\sim 1$--$5\,$kHz; this slightly extends
the absolute simulation window, but it is completed well within the coherence
times of dynamically decoupled ions, $T_2\ge 100\,$ms~\cite{wang2021single}. With this
geometry-invariant duration, the integrated truncation error is suppressed by the
spatial-volume expansion and the scaling law is fully operative.

Linear Paul traps currently sustain co-linear chains of $m\sim 50$--$100$ ions
(e.g.\ the JQI and Innsbruck
architectures~\cite{zhang2017observation,joshi2022observing}) with individual addressing and
single-site readout, and two-dimensional Paul-trap crystals have recently
reached more than $500$ ions with site-resolved readout~\cite{guo2024site};
two-dimensional Penning traps (the NIST
framework~\cite{britton2012engineered,bohnet2016quantum}) hold up
to $m\sim 200$ ions in a planar crystal. Maintaining stable individual addressing
across large 2D arrays carries alignment and phase-stability overheads, but the
suppression of truncation leakage outweighs them, making trapped ions a leading
candidate.

\subsection{Superconducting circuits: Kerr truncation and non-local routing}
\label{subsec:cqed}

Circuit QED offers a programmable solid-state substrate. Dispersive interactions
between Josephson-junction artificial atoms and microwave resonators give SPAM
fidelities exceeding $99\%$, bypassing probabilistic input generation. Using bare
transmons to simulate generalized bosons, however, forces the most penalized
scaling: the weak transmon anharmonicity ($\alpha/2\pi\approx-300\,$MHz)~\cite{koch2007charge} confines
it to a hard-core two-level system ($d=2$), invoking the maximal overhead
$m=\tilde\Omega(n^3)$.

To reach the efficient $m=\tilde\Omega(n^2)$ boundary of higher-spin
representations ($d\ge 3$), the degrees of freedom must move from the transmons to
the continuous-variable cavities. A bare cavity is harmonic ($d\to\infty$), which
would revert the emulation to canonical boson sampling and nullify the
Lie-algebraic mapping; to enforce a finite-$d$ truncation, dispersively coupled
transmons engineer a strong Kerr nonlinearity ($K/\kappa\gg1$)
\cite{leghtas2015confining,mirrahimi2014dynamically}. This Kerr blockade prohibits
the cavity from occupying $|n\ge d\rangle$ Fock states. The transmons then play a
dual role---executing photon-number-resolved selective $\pi$-pulses to initialize
generalized Fock states~\cite{touzard2019gated,heeres2015cavity}, and enforcing the truncation---
while the emulation proceeds in the low-order cavity manifolds, which have
intrinsic lifetimes $T_1\ge 1\,$ms~\cite{reagor2016quantum,milul2023superconducting}.

Viability is then governed by topology. Prevailing architectures use planar grids
optimized for nearest-neighbour surface-code correction~\cite{krinner2022realizing}, with capacitive couplings
bounded at $J/2\pi\sim 10$--$20\,$MHz. For analog sampling this 2D geometry incurs
a Lieb-Robinson limit: global scrambling needs
$\tau_{\mathrm{phys}}=\Omega(\sqrt m/J)$, whose substitution into the error bound
drives the truncation leakage up with system size. Standard planar lattices are
therefore disqualified from scalable continuous-time sampling. The remedy is to
abandon purely local planar geometries for non-local routing, which is viable in
3D cQED: coupling multiple high-$Q$ storage cavities to a central multimode bus---
such as the 3D ``flute'' cavities of recent quantum-memory
experiments~\cite{chakram2021seamless}---and
applying parametric multipump wave-mixing~\cite{naik2017random} synthesizes an effective all-to-all
hopping $J_{ij}^{\mathrm{eff}}$ by modulating tunable couplers at the mode-pair
difference frequencies. Provided the bus sustains the multiplexed drives without
severe Purcell decay or spurious multiphoton resonances, this restores
$\tau_{\mathrm{phys}}=\mathcal O(1)$, making non-local 3D cQED the viable
solid-state route.

\subsection{Neutral-atom arrays and cavity-mediated networks}
\label{subsec:neutral}

Neutral atoms in optical tweezer arrays are highly scalable and resolve the SPAM
bottlenecks of linear optics, with single-site preparation and fluorescence-
imaging fidelities exceeding $99\%$~\cite{barredo2016atom,endres2016atom,manetsch2025tweezer}. The hyperfine manifolds of alkaline-earth-
like atoms ($^{88}\text{Sr}$ or $^{171}\text{Yb}$)~\cite{cooper2018alkaline,norcia2018microscopic,saskin2019narrow} provide a decoherence-free
encoding for higher-dimensional representations: addressing long-lived magnetic
sublevels by multi-tone Raman transitions isolates $d=3$ (spin-1) or $d=4$
(spin-3/2) subalgebras, allowing deterministic multi-excitation state preparation
without the dispersive distortions that limit superconducting resonators.

The vulnerability is again topological. Standard coupling via the Rydberg blockade
relies on van der Waals interactions decaying as $1/R^6$~\cite{saffman2010quantum,browaeys2020many}, which restrict the
interaction graph to local geometries. By the connectivity criterion, executing a
globally scrambled $A$ on such a graph forces the evolution time to grow with the
array diameter, opening the leakage window before the sampling regime is reached.
Two-dimensional van der Waals arrays therefore succumb to error growth. A partial
remedy uses resonant dipole-dipole interactions between Rydberg states, with a
longer-range $1/R^3$ tail~\cite{barredo2015coherent}; reaching the fast-scrambling $\tau=\mathcal O(1)$
target, however, ultimately requires embedding the tweezer array in a
high-finesse optical cavity. Cavity QED supplies a photon-mediated bus coupling
all atomic ensembles irrespective of separation: driving the global cavity mode
with detuned optical fields synthesizes an effective all-to-all hopping across the
mapped internal states~\cite{periwal2021programmable}, removing the planar Lieb-Robinson bound and granting the
required constant scrambling time. Cavity-coupled neutral-atom arrays thus combine
deterministic higher-spin SPAM with non-local connectivity.

\subsection{Local-interaction platforms: optical lattices and integrated
photonics}
\label{subsec:local}

Two further platforms fall on the unfavourable side of the criterion, for opposite
reasons, and together delimit it.

Ultracold atoms in optical lattices are the canonical analog
simulator~\cite{bloch2012quantum}, natively
realizing Bose--Hubbard dynamics and supporting very large mode numbers,
$m\sim10^4$ spatial sites, with $J$ and $U$ tunable by Feshbach resonances and
lattice depth~\cite{chin2010feshbach,gross2017quantum} ($J/h\sim 100\,$Hz--$1\,$kHz). Their utility bifurcates by $d$. In
the weakly interacting limit ($U\ll J$, e.g.\ dilute $^{87}\text{Rb}$) the local
occupation is unconstrained ($d\to\infty$), truncation leakage is absent, and the
lattice is an excellent canonical linear-boson-sampling emulator at $m=\Theta(n)$,
tolerating the macroscopic scrambling time of its local geometry. But for
finite-$d$ emulation---spinless fermions for $d=2$, or the Mott regime
($U\gg J$)~\cite{greiner2002quantum,childs2014bose} for a spin-1 subalgebra ($d=3$)---the
nearest-neighbour topology imposes a rigid Lieb-Robinson cone, $v_{LR}\propto
Ja/\hbar$, so global scrambling of a 2D lattice needs
$\tau_{\mathrm{phys}}=\Omega(\sqrt m/J)$, opening the leakage window into the
$n_k\ge d$ sectors. Standard lattices are thus excellent for $d\to\infty$
continuous variables but geometrically preclude finite-$d$ emulation, unless
augmented by a non-local cavity bus, which returns to the all-to-all requirement
above.

Integrated photonic circuits sit at the opposite pole. Programmable Mach--Zehnder
meshes on silicon nitride or lithium niobate synthesize arbitrary unitaries over
$m\sim100$ modes via Clements or Reck layouts~\cite{reck1994experimental,clements2016optimal,carolan2015universal}, and with infinite local dimension
($d\to\infty$) the truncation projector $P$ vanishes identically, so the platform
trivially bypasses the Lieb-Robinson penalty and targets $m=\Theta(n)$. Here the
binding constraint is not topological but the exponential loss barrier: executing a
Haar-random unitary sends the photons through an $\mathcal O(m)$-depth mesh, and
even at component transmissivity $\eta_0>99\%$ the $n$-photon survival probability
decays as $\eta_0^{\mathcal O(nm)}$. When a macroscopic fraction of photons is
lost, the output distribution transitions from a $\#$P-hard permanent- or
hafnian-based distribution to an efficiently simulable classical one; Gaussian
boson sampling with deterministic squeezed sources mitigates the generation
bottleneck but not the propagation loss. Integrated photonics thus realizes the
ideal zero-truncation emulation, bottlenecked not by scrambling but by the
exponential suppression of the multi-particle coincidence rate.

\section{Conclusion and Outlook}
\label{sec:conclusion}

We have established a quantitative resource law for emulating boson sampling on
deterministic, finite-dimensional hardware. Placing boson, fermion, and spin
samplers in a single framework---non-interacting evolution on an irreducible
representation, with the transition amplitude given by a submatrix immanant---we
analyzed the bunching leakage that the local truncation introduces and that is
absent in linear optics. The central technical result, Theorem~\ref{thm:leakage},
is that the leakage operator norm concentrates at $\tilde O(\sqrt n)$ rather than
at the worst-case $O(n)$ of prior spin-based emulations, proved by decomposing the
correlated many-body operator into independent random matrices and applying
non-commutative concentration in a Gaussian model of the transition matrix; the
transfer to the physical Haar ensemble is reduced to a comparison step
(Conjecture~\ref{conj:transfer}, Appendix~\ref{app:weingarten}) that we verify at
leading order but leave open in general. This tightens the mode
requirement from $m=\Omega(n^4)$ to $m=\tilde\Omega(n^{1+2/(d-1)})$, which for a
spin-$1$ local representation ($d=3$) falls to $m=\tilde\Omega(n^2)$, coinciding
with the collision-free threshold of the dilute limit itself. We do not establish
new hardness; what we provide is the spatial cost---controlled by the local
dimension $d$, and operative only on platforms with fast, non-local
connectivity---of preserving the hardness inherited from boson sampling.

Two questions are left open, and the body of this discussion is given to them
because they are not independent. The first is the removal of the polylogarithmic
factor in the upper bound. The bound of Theorem~\ref{thm:leakage} is
$\tilde O(\sqrt n)$, optimal up to a polylogarithmic factor in the mode count, and
the origin of that factor is precise enough to point at what a sharper result
would need. We access the operator norm through the high moments
$\mathbb E\,\mathrm{Tr}\big((XX^\dagger)^k\big)$ and pass to the norm by Markov's
inequality at a fixed order $k$; the prefactor in the concentration inequality is
the entropic dimension of the Fock space, $\log D=O(n\log m)$, and to make the
exponent dominate it the order must be taken as $k\sim n\log m$. The logarithms in
the final bound are the price of reaching the spectral edge through the $k$-th
moment rather than controlling it directly, so a method that bounded the largest
singular value without passing through a high moment would remove them.

Three routes toward such a method suggest themselves, none closed for the
structured operator at hand. The most direct is an edge-universality statement:
for unstructured ensembles the extreme singular value is pinned to the
Marchenko--Pastur edge with sharp Tracy--Widom fluctuations and no logarithmic
correction, but edge universality is established under independence or weak
dependence of the entries, whereas the leakage operator has entries fixed by the
$O(m^2)$ parameters of $A$ and constrained by the projectors $P,Q$, making it both
correlated and sparse; extending edge universality to operators of this
projector-constrained type is, to our knowledge, open. A second route is a local
law---the resolvent methods controlling $(XX^\dagger-z)^{-1}$ to the optimal scale
near the edge, which have been carried to sparse and structured ensembles such as
random regular graphs and Erd\H{o}s--R\'enyi matrices in other contexts; adapting
one to the bipartite support graph of the leakage operator would give edge control
without the moment detour, the difficulty being that this graph is not a standard
sparse random graph but a strongly asymmetric one. The third route is spectral
rather than entry-level, and is suggested by a neighbouring result: for the traces
of the first powers of a Haar unitary, $\mathrm{Tr}\,U^k$, the rate of Gaussian
approximation in total variation is super-exponential, with explicit dependence on
the number of traces and the dimension~\cite{johanssonlambert2020}, showing that
for spectral objects a precision far beyond the moment method is attainable when
the relevant degree grows with the dimension. The leakage operator is an
entry-level object, a polynomial in the matrix elements rather than a function of
the eigenvalues, and the two are governed by different machinery---the entry-level
transfer is the Weingarten regime, with polynomial total-variation rates, while
the spectral transfer admits the super-exponential rate---so whether an
entry-level analogue exists for structured operators is open; if it did, it would
supply exactly the edge-level control the moment method lacks.

The second open question is the matching lower bound, and it turns out to be the
same question seen from the other side. Theorem~\ref{thm:leakage} bounds the norm
from above; whether $\lVert QH_{\mathrm{Lie}}P\rVert=\tilde\Theta(\sqrt n)$, with a
matching $\Omega(\sqrt n)$, decides whether the mode-scaling law is optimal or
merely sufficient. It is unresolved analytically; numerically, the
Haar-ensemble norm matches the closed form $\sqrt{d(n-d+1)}$ of
Section~\ref{subsec:numerics} to sub-percent accuracy across $d=2$--$5$,
fixing the expected answer together with its sharp constant. The concentration inequality is one-sided
by construction: it bounds the norm from above through the variance statistic, and
the same dilution that makes the variance small offers no purchase on the extreme
singular value from below, so a lower bound must come from the geometry of the
operator rather than from a global average. That geometry is the strongly asymmetric
bipartite support graph of Section~\ref{subsec:tightness}, and the two routes to a
matching bound start from its high-degree hubs exactly as described there: a
hub-localized Rayleigh quotient, which must overcome the destructive interference
among the $O(m)$ outgoing amplitudes, and a refined, hub-dominated moment method,
which inherits the same edge-versus-bulk limitation as the upper bound. We do not
repeat that discussion here.

That shared limitation is what makes the two questions one. The polylogarithm is
present because we reach the norm through the $k$-th moment rather than through the
spectral edge, and the lower bound is missing because the moment method sees the
bulk and not the edge; a direct spectral analysis of the structured leakage
operator---an edge-universality or local-law statement adapted to its sparse,
correlated, asymmetric support graph---would remove the polylog and supply the
matching lower bound at once, replacing $\tilde O(\sqrt n)$ and the open
$\Omega(\sqrt n)$ by a single $\Theta(\sqrt n)$. Both reduce to controlling the
spectral edge of Fock-space operators built from a small number of correlated
parameters, which we isolate as the principal mathematical question left by this
work; until it is resolved, the result stands as a near-optimal upper bound,
sufficient to place finite-dimensional emulation within the hardness regime.

Turning from the mathematics to the laboratory, the connectivity criterion of
Section~\ref{sec:experiment} sharpens what a demonstration would require, and the
obstacles are concrete. The first is the overhead itself: for hard-core qubits
($d=2$) the requirement $m=\tilde\Omega(n^3)$ is severe, since even $n\sim 50$
would demand of order $10^5$ modes, beyond any near-term array, and this is the
operational reason the spin-$1$ result matters---at $d=3$ the requirement drops to
$m=\tilde\Omega(n^2)$, bringing the same $n\sim 50$ to a few thousand modes, large
but within the trajectory of trapped-ion and neutral-atom platforms. The practical
path therefore runs through higher local dimension, which makes protecting the
$d=3$ encoding (the $m_F=\pm1$ sublevels of $^{171}\text{Yb}^+$, or the magnetic
sublevels of alkaline-earth atoms) against dephasing a central rather than
incidental requirement. The second obstacle is maintaining non-local connectivity
as $m$ grows: the weak base coupling needed to isolate a single collective mode
($J_0/2\pi\sim$ a few kHz for ions), the addressing stability across a 2D crystal,
and the susceptibility of a shared microwave bus to Purcell decay all degrade with
scale, so that while the criterion is binary in principle, in practice the
all-to-all coupling degrades continuously and the operative question is the
largest $m$ at which $\tau$ stays effectively constant.

The third obstacle, verification, is the most fundamental, because it inherits the
very hardness the protocol relies on. Certifying the output against the ideal
distribution requires evaluating that distribution, which is the permanent and
hence \#P-hard, so cross-entropy benchmarking---which scores samples by their
average ideal probability $F_{\mathrm{XEB}}=\langle 2^n P_{\mathrm{ideal}}(x)
\rangle-1$---demands a classical computation feasible only at small $n$; the same
ceiling limits the likelihood-ratio and Bayesian tests developed for photonic
boson sampling~\cite{aaronson2014uniform,spagnolo2014experimental,carolan2014verification}, while the heavy-output and suppression-law checks~\cite{tichy2014stringent} probe necessary
but not sufficient features of the distribution, trading feasibility for
certification strength. Scalable certification of boson sampling,
finite-dimensional or photonic, remains open~\cite{hangleiter2023computational,hangleiter2019sample}, and what is available is validation
at moderate $n$, where the permanent is still computable, together with
consistency checks that scale. Within that validatable regime, finite-dimensional
emulation faces a specific threat to benchmarking and a specific remedy, both made
precise by the framework. The threat is the residual superexchange: the truncation
is exact only as $U\to\infty$, and at finite $U$ virtual transitions to the gapped
sector generate $O(J^2/U)$ density-density or $Z_iZ_j$ couplings. These are a
coherent, $U(1)$-preserving perturbation of the ideal hopping generator
$H_{\mathrm{Lie}}$, not a source of new hardness---the \#P-hardness rests on the
permanent structure of the amplitude, and the effect of the superexchange is
bounded directly by the Dyson analysis of Section~\ref{sec:mainresult}, with no
appeal to any spectral or ergodic property---but for verification an
\emph{uncharacterized} static perturbation of this kind is fatal, since it biases
the ideal probabilities used in the benchmark and corrupts the fidelity estimate.
The remedy is that the perturbation is learnable. Late-time tensor-network
tomography is unviable on the fast-scrambling graphs the protocol needs, the
volume-law entanglement defeating efficient classical representation; but the
superexchange is static and few-body, so a short-time Hamiltonian-learning
protocol---evolving simple product states for a duration
$\delta t\ll\tau_{\mathrm{scramble}}$ and measuring localized two-body
correlators---isolates the $O(J^2/U)$ matrix elements before scrambling sets in,
and incorporating them into the classical benchmark model separates the emulation
fidelity from the calibration error, making the benchmark usable at moderate $n$
despite the residual interactions. Beyond that small-$n$ window the framework
supplies a check that scales: the sampling probability of Eq.~\eqref{eq:prob}
factorizes into the ideal boson-sampling probability and the bunching penalty
$\chi_{\mathfrak g}$, concentrating the entire finite-dimensional deviation in a
single measurable quantity, the rate of $d$-fold collisions, so that confirming
this rate sits below the threshold predicted by the mode-scaling law---weaker than
full certification, but accessible at any $n$---verifies that the emulator
operates in the regime where the hardness argument applies. The same factorization
clarifies how leakage enters the benchmark, a drop in score being attributable to
the collision penalty $\chi_{\mathfrak g}$ rather than to generic depolarizing
noise, which is a sharper diagnostic than the score alone; whether this makes
finite-dimensional sampling more or less susceptible to the classical spoofing
strategies recently raised against cross-entropy benchmarking, which exploit the
noise-induced decay of the score to forge it efficiently, is an open question, and
a natural one given that the collision penalty is a coherent rather than a
stochastic deviation.

Several extensions follow naturally. The analysis was scoped to the unitary family
in its defining representation, where the amplitude is a permanent; for the
orthogonal and symplectic families the relevant commutant is the Brauer algebra
and the amplitudes are Brauer-algebra invariants rather than $S_N$-immanants
(Section~\ref{sec:framework}), so determining when an $SO(M)$ or $Sp(M)$ defining
representation yields a genuinely hard invariant, and whether the leakage analysis
transfers to the Brauer setting, would extend the resource law beyond the
permanent case. The lower bound discussed above is a second direction, its
resolution converting the near-optimal statement into an optimal one. A third is
the connection to lossy and noisy regimes: the bunching leakage analyzed here is a
coherent, unitarity-preserving error internal to the truncation, distinct from the
incoherent particle loss that limits photonic platforms, and a unified treatment
of the two error channels on a single matter-based architecture remains to be
developed.

\section{Acknowledgements}
EJK acknowledges financial support from the National Science and Technology Council (NSTC) of Taiwan under Grant No.~NSTC~114-2112-M-A49-036-MY3

\emph{Data and code availability.}---The numerical code (a self-contained,
seed-deterministic notebook), the raw norm data underlying
Fig.~\ref{fig:numerics}, and the figure-generation scripts are available at
[Zenodo DOI to be inserted]; the complete parameter grid and sector dimensions
are tabulated in Table~\ref{tab:grid}.

\appendix

\section{Algebraic Factorization and Schur--Weyl Projection on Tensor Spaces}
\label{app:tensor_factorization}

In Sec.~\ref{sec:framework} we used the fact that a non-interacting many-body
evolution operator acts as the $N$-fold tensor power of the defining-representation
matrix, restricted to a symmetry subspace. This appendix derives that statement in
full and identifies precisely where it specializes to the unitary ($A$-series) family.

\subsection{Tensor action of the algebra}
Let $\mathfrak{g}$ be a compact Lie algebra and $V\cong\mathbb{C}^M$ its defining module,
on which $H\in\mathfrak{g}$ acts as $\pi_{\mathrm{def}}(H)$. The unsymmetrized
$N$-particle space is $V^{\otimes N}$. On a tensor product the algebra acts through the
primitive coproduct, i.e.\ as a derivation distributing over the factors,
\begin{equation}
    \pi_{\otimes N}(H) = \sum_{k=1}^N H^{(k)},
    \qquad
    H^{(k)} = \mathbb{I}\otimes\cdots\otimes
    \underbrace{\pi_{\mathrm{def}}(H)}_{k\text{-th slot}}\otimes\cdots\otimes\mathbb{I}.
    \label{eq:app_symmetric_sum}
\end{equation}
Since $H^{(k)}$ and $H^{(l)}$ act on disjoint slots, $[H^{(k)},H^{(l)}]=0$ for $k\neq l$.
This additive (derivation) action at the level of the algebra exponentiates to a
\emph{multiplicative} (group-like) action at the level of the group, which is what turns
the sum~\eqref{eq:app_symmetric_sum} into a tensor \emph{power} below.

\subsection{Exponentiation to a tensor power}
Because the summands in Eq.~\eqref{eq:app_symmetric_sum} commute, the exponential
factorizes,
\begin{equation}
    \hat{\mathcal{U}}_{\mathrm{full}}(t)
    = \exp\!\big[-i\,\pi_{\otimes N}(H)\,t\big]
    = \prod_{k=1}^N \exp\!\big[-i\,H^{(k)}t\big].
    \label{eq:app_exponential_factorization}
\end{equation}
Each factor is the identity on every slot but the $k$-th, where it equals the
single-particle propagator $U=\exp[-i\,\pi_{\mathrm{def}}(H)\,t]$, an $M\times M$ matrix.
The product is therefore the tensor power
\begin{equation}
    \hat{\mathcal{U}}_{\mathrm{full}}(t) = U\otimes U\otimes\cdots\otimes U
    \equiv U^{\otimes N}.
    \label{eq:app_tensor_power}
\end{equation}
On the full unsymmetrized space the many-body evolution is thus exactly $U^{\otimes N}$
with $U=\pi_{\mathrm{def}}(e^{-iHt})$. This holds for any $\mathfrak{g}$ and any defining
module; the family-dependent input enters only through the symmetry projection of the
next subsection.

\subsection{Schur--Weyl projection and the amplitude (unitary family)}
Physical states occupy a symmetry subspace fixed by the exchange statistics. For the
unitary family $G=U(M)$, the commutant of $U^{\otimes N}$ on $V^{\otimes N}$ is the group
algebra $\mathbb{C}[S_N]$ acting by slot permutations $\hat P_\sigma$, and
\begin{equation}
    [U^{\otimes N},\,\hat P_\sigma]=0,\qquad \forall\,\sigma\in S_N,
    \label{eq:app_schur_weyl_commute}
\end{equation}
so $V^{\otimes N}=\bigoplus_{\lambda\vdash N,\,\ell(\lambda)\le M}
\mathbb{S}^{\lambda}\otimes W_\lambda$, with $\mathbb{S}^\lambda$ the Specht module
($\dim d_\lambda$) and $W_\lambda$ the Weyl module.

A point that must be handled with care for mixed diagrams: the central projector
$\Pi_\lambda=\tfrac{d_\lambda}{N!}\sum_\sigma\chi^\lambda(\sigma)\hat P_\sigma$ projects
onto the entire isotypic block $\mathbb{S}^\lambda\otimes W_\lambda$, of dimension
$d_\lambda\dim W_\lambda$, and so selects a single state only when $d_\lambda=1$. A
physical weight vector for general $\lambda$ requires, in addition to the occupation
data, a choice of standard Young tableau $T$, i.e.\ a primitive Young idempotent
$\hat e_T$ ($\hat e_T^2=\hat e_T$, $\Pi_\lambda=\sum_T \hat e_T$) that picks one copy of
$W_\lambda$ from the multiplicity space:
\begin{equation}
    |\mathbf{m};T\rangle = \mathcal{N}_{\mathbf{m}}\,\hat e_T\,|I_{\mathbf{m}}\rangle,
    \qquad
    |I_{\mathbf{m}}\rangle=\bigotimes_{k=1}^N|i_k\rangle,
    \label{eq:app_young_idempotent}
\end{equation}
where $\mathcal{N}_{\mathbf{m}}$ is the occupation-dependent normalization (carrying the
$\prod_i n_i!$ factors restored in any probability). At the symmetric
($\lambda=(N)$) and antisymmetric ($\lambda=(1^N)$) endpoints $d_\lambda=1$, the tableau
label is vacuous, and $\hat e_T=\Pi_\lambda$.

Using $[U^{\otimes N},\hat P_\sigma]=0$ (hence $[U^{\otimes N},\hat e_T]=0$), and that
$U^{\otimes N}$ acts as the identity on the $\mathbb{S}^\lambda$ factor (Schur's lemma)
so the result is independent of $T$, the weight-to-weight amplitude is
\begin{align}
    \mathcal{A}_{\mathbf{m}\to\mathbf{n}}
    &= \langle I_{\mathbf{n}}|\,U^{\otimes N}\,\hat e_T\,|I_{\mathbf{m}}\rangle
       \,\big/\,\mathcal{Z}_{\mathbf{m},\mathbf{n}} \nonumber\\
    &\propto \sum_{\sigma\in S_N}\chi^{\lambda}(\sigma)
       \prod_{k=1}^N U_{j_k,\,i_{\sigma(k)}}
     = \mathrm{Imm}^{\lambda}\!\big(U[\mathbf{m},\mathbf{n}]\big),
    \label{eq:app_amplitude_final}
\end{align}
where $\mathcal{Z}_{\mathbf{m},\mathbf{n}}$ collects the normalizations and
$U[\mathbf{m},\mathbf{n}]$ is the $N\times N$ submatrix selected by the occupation
patterns. The single permutation sum on the right is exact because, after Schur's lemma
removes the multiplicity space, the character $\chi^\lambda$ is the only surviving
$\mathbb{S}^\lambda$ data; equivalently, $\mathrm{tr}_{\mathbb{S}^\lambda}\hat e_T$
returns $\chi^\lambda$. Equation~\eqref{eq:app_amplitude_final} is the matrix immanant.
The single-row diagram gives $\chi^\lambda\equiv 1$ (permanent); the single-column
diagram gives $\chi^\lambda=\mathrm{sgn}$ (determinant).

\subsection{Scope: other classical families}
The reduction~\eqref{eq:app_schur_weyl_commute}--\eqref{eq:app_amplitude_final} is
specific to $G=U(M)$, where the commutant of $U^{\otimes N}$ is $\mathbb{C}[S_N]$. For
the orthogonal and symplectic families the relevant module is not the full $V^{\otimes
N}$ but its trace-free part, and the commutant is the Brauer algebra $B_N(\pm M)$, whose
basis includes contraction (pairing) diagrams in addition to permutations. The resulting
amplitudes are Brauer-algebra invariants of $U[\mathbf{m},\mathbf{n}]$---not
$S_N$-immanants---and the shape/complexity correspondence must be re-derived in that
setting (Sec.~\ref{sec:framework}). Accordingly, the identity ``non-interacting
amplitude $=$ immanant of an $N\times N$ submatrix'' is established here for the unitary
family; it is the canonical case from which $SO$ and $Sp$ depart through their distinct
invariant theory.

\section{Haar-to-Gaussian Comparison via the Weingarten Expansion}
\label{app:weingarten}

This appendix develops the Haar-to-Gaussian comparison of
Lemma~\ref{lem:weingarten} and isolates the one unproven step on which it rests.
The claim under examination is that the leakage operator norm under the exact Haar
ensemble agrees, in the dilute regime, with the one under the independent Gaussian
model, up to a relative correction $\tilde O(n^4/m^2)$ that vanishes for
$m=\tilde\Omega(n^3)$. The argument has four parts. We first reduce
the comparison of operator norms to a comparison of moments at a single optimized
order $k$ (Sec.~\ref{appB:reduction}). We then write both moments through their
permutation expansions---Weingarten for Haar, Wick for Gaussian---and isolate
their difference in an off-diagonal remainder (Sec.~\ref{appB:expansions}). The
core estimate would bound that remainder by organizing it according to the genus of the
underlying gluing (Sec.~\ref{appB:genus}); the one structural input this requires,
that the projectors and local operators do not disturb the genus ordering, is the
single step we do \emph{not} establish in general. We isolate it as
Conjecture~\ref{conj:transfer} and verify it only at leading order, discussing
below why the projector constraints place it in tension with the uniform power
counting it assumes. We close by tracking the (conditional) result back to the
operator norm, uniformly in $k$ (Sec.~\ref{appB:conclusion}).

\subsection{Reduction to a moment comparison}
\label{appB:reduction}

The leakage operator is linear in the single-particle matrix,
\begin{equation}
    X = \sqrt m\,QH_{\mathrm{Lie}}P = \sum_{u=1}^{m}\sum_{v\ne u} A_{vu}\,M_{vu},
    \qquad M_{vu}=Q\,E^{(v)}_{+}E^{(u)}_{-}\,P,
    \label{eq:appB_operator}
\end{equation}
with $A_{vu}=\sqrt m\,U_{vu}$ for $U$ Haar on $U(m)$ (the Haar model) and
$A_{vu}=Z_{vu}$, $Z_{vu}\sim\mathcal{CN}(0,1)$ independent (the Gaussian model).
For any $k\in\mathbb N$ the operator norm is dominated by the $k$-th moment of the
squared singular values,
\begin{equation}
    \lVert X\rVert^{2k}
    = \Big(\max_i s_i^2\Big)^{k}
    \le \sum_i s_i^{2k}
    = \mathrm{Tr}\big[(XX^\dagger)^k\big],
    \label{eq:appB_norm_moment}
\end{equation}
so Markov's inequality gives, for every $k$ and $t>0$,
\begin{equation}
    \Pr\big[\lVert X\rVert\ge t\big]
    \le t^{-2k}\,\mathbb E\,\mathrm{Tr}\big[(XX^\dagger)^k\big].
    \label{eq:appB_markov}
\end{equation}
The bound \eqref{eq:appB_markov} is not optimized by $k\to\infty$; rather, with
$\mathbb E\,\mathrm{Tr}[(XX^\dagger)^k]\le D\,\sigma^{2k}$ for a matrix of
dimension $D$ and variance scale $\sigma^2$, the right-hand side is
$D(\sigma/t)^{2k}$, minimized over integer $k$ at the largest $k$ for which the
geometric factor still decreases, i.e.\ once $t>\sigma$, by taking $k$ as large as
the expansion remains controlled. Setting $t^2=2(1+c')\sigma^2\log D$ makes the
bound $D^{-c'}=\exp(-c'\log D)$ with the optimizing order
\begin{equation}
    k \;\sim\; \log D \;=\; O(n\log m),
    \label{eq:appB_kopt}
\end{equation}
using $\log D=O(n\log m)$ from Eq.~\eqref{eq:logD}. This fixes the moment order at
$k\sim n\log m$; it is the source of the polylogarithmic factor and is held fixed
throughout. Because $X$ is linear in $A$, the expectation in
\eqref{eq:appB_markov} is a balanced Haar (or Gaussian) integral of a monomial of
degree
\begin{equation}
    2q = 2k, \qquad\text{i.e.}\quad q=k\sim n\log m,
    \label{eq:appB_degree}
\end{equation}
the degree being set by the moment order, not by the particle number $n$. A naive
treatment with $q\le n$ would misjudge the Weingarten counting and produce the
wrong exponent; the correct degree is $q=k$.

\subsection{Weingarten and Wick expansions}
\label{appB:expansions}

For index sequences $\mathbf i,\mathbf j$ of length $q=k$, the exact Haar integral
is the Weingarten formula~\cite{collins2006integration,collins2009orthogonal},
\begin{equation}
    \int_{U(m)}\!\prod_{a=1}^k U_{i_a j_a}\prod_{a=1}^k U^*_{i'_a j'_a}\,d\mu(U)
    = \sum_{\sigma,\tau\in S_k}
      \delta_{\mathbf i,\mathbf i'\sigma}\,\delta_{\mathbf j,\mathbf j'\tau}\,
      \mathrm{Wg}(\sigma^{-1}\tau,m),
    \label{eq:appB_wg}
\end{equation}
where $\delta_{\mathbf i,\mathbf i'\sigma}=\prod_a\delta_{i_a,i'_{\sigma(a)}}$.
The Weingarten function admits the exact convergent
series~\cite{collins2006integration}, valid for $m\ge k$,
\begin{equation}
    \mathrm{Wg}(\nu,m)
    = m^{-k}\sum_{\ell\ge 0}(-1)^{\ell} m^{-\ell}\,
      \#\{(\rho_1,\dots,\rho_\ell)\in (S_k^*)^{\ell}:\rho_1\cdots\rho_\ell=\nu\},
    \label{eq:appB_wg_exact}
\end{equation}
with $S_k^*$ the non-identity permutations, whose leading term reproduces the
asymptotic
\begin{equation}
    \mathrm{Wg}(\nu,m) = m^{-(k+|\nu|)}\big(\mathrm{M\ddot ob}(\nu)+O(m^{-2})\big),
    \qquad |\nu|=k-C(\nu),
    \label{eq:appB_wg_asymp}
\end{equation}
where $C(\nu)$ is the number of cycles and $|\nu|$ the minimal transposition
length. The only facts we use are the power counting $\mathrm{Wg}(\nu,m)\sim
m^{-(k+|\nu|)}$, the identity value $\mathrm{Wg}(e,m)=m^{-k}(1+O(k^2/m^2))$, and
the boundedness $|\mathrm{M\ddot ob}(\nu)|\le 4^{|\nu|}$. Under the Gaussian
model, Isserlis' theorem gives the same monomial as a single sum over pairings,
\begin{equation}
    \mathbb E_{\mathrm{Gauss}}\!\Big[\textstyle\prod_a Z_{i_aj_a}\prod_a
    Z^*_{i'_aj'_a}\Big]
    = \sum_{\sigma\in S_k}
      \delta_{\mathbf i,\mathbf i'\sigma}\,\delta_{\mathbf j,\mathbf j'\sigma}.
    \label{eq:appB_wick}
\end{equation}

Writing the index sum weighted by the deterministic operator factors as
\begin{equation}
    W_{\sigma,\tau}
    = \sum_{\text{indices}}\delta_{\mathbf i,\mathbf i'\sigma}\,
      \delta_{\mathbf j,\mathbf j'\tau}\,\Gamma,
    \label{eq:appB_W}
\end{equation}
where $\Gamma$ collects the matrix elements of $M_{vu}$ and $M^\dagger_{vu}$ glued
in the cyclic trace order, the two moments are
\begin{align}
    \mathbb E_{\mathrm{Gauss}}\,\mathrm{Tr}\big[(XX^\dagger)^k\big]
    &= m^{-k}\sum_{\sigma\in S_k} W_{\sigma,\sigma},
    \label{eq:appB_gauss}\\
    \mathbb E_{\mathrm{Haar}}\,\mathrm{Tr}\big[(XX^\dagger)^k\big]
    &= \sum_{\sigma,\tau\in S_k} W_{\sigma,\tau}\,\mathrm{Wg}(\sigma^{-1}\tau,m),
    \label{eq:appB_haar}
\end{align}
the Gaussian rescaling $A_{vu}=Z_{vu}$ versus $A_{vu}=\sqrt m\,U_{vu}$ matching the
$\sigma=\tau$ term of \eqref{eq:appB_haar} weighted by $\mathrm{Wg}(e,m)=m^{-k}$.
Separating that diagonal term,
\begin{align}
    \mathbb E_{\mathrm{Haar}}\,\mathrm{Tr}\big[(XX^\dagger)^k\big]
    &= \big(1+O(k^2/m^2)\big)\,
      \mathbb E_{\mathrm{Gauss}}\,\mathrm{Tr}\big[(XX^\dagger)^k\big]
    + R,
    \nonumber\\
    R &= \!\!\sum_{\substack{\sigma,\tau\\ \sigma\ne\tau}}\!\!
        W_{\sigma,\tau}\,\mathrm{Wg}(\sigma^{-1}\tau,m).
    \label{eq:appB_diag}
\end{align}
The diagonal prefactor $1+O(k^2/m^2)=1+\tilde O(n^2/m^2)$ at $k\sim n\log m$ is
subleading to the off-diagonal correction below; the entire ensemble difference is
the remainder $R$.

\subsection{Genus-resolved bound on the remainder}
\label{appB:genus}

Set $\pi=\sigma^{-1}\tau\ne e$. The magnitude of $W_{\sigma,\tau}$ relative to the
diagonal $W_{\sigma,\sigma}$ is governed by the number of free index loops of the
glued ribbon diagram: each independent loop is an unconstrained index summation
contributing a factor $m$. For the bare gluing of the $U$ indices, the loop count
obeys the Euler relation
\begin{equation}
    L(\sigma,\pi) = L_0(\sigma) - |\pi| + 2g(\sigma,\pi),
    \qquad g\ge 0,\ \ g\le\lfloor|\pi|/2\rfloor,
    \label{eq:appB_euler_bare}
\end{equation}
where $L_0$ is the planar (diagonal) loop count and $g$ the genus of the surface
obtained by gluing the ribbon according to $\sigma$ and $\sigma\pi$; this is the
standard topological expansion of unitary moments~\cite{weingarten1978asymptotic,brouwer1996diagrammatic,collins2003moments,collins2006integration}. The physical moment differs by
the deterministic decorations $\Gamma$---the operators $M_{vu}$ at each vertex and
the projectors $P,Q$ enforcing the occupation constraints. The following
conjecture asserts that these decorations preserve the genus ordering; it is the
one unproven structural input of the appendix, and we discuss its status---and the
tension it is in---immediately after.

\begin{conjecture}[Decorated genus ordering]
\label{conj:transfer}
Let $L_\Gamma(\sigma,\pi)$ be the free-loop count of the decorated ribbon
diagram, including the operators $M_{vu}$ and the projectors $P,Q$. Then
\begin{equation}
    L_\Gamma(\sigma,\pi) \le L_0(\sigma) - |\pi| + 2g(\sigma,\pi),
    \label{eq:appB_decorated}
\end{equation}
so that a contribution at distance $|\pi|=\ell$ and genus $g$ carries a net power
relative to the diagonal of at most $m^{-2(\ell-g)}$.
\end{conjecture}

\paragraph{Status of Conjecture~\ref{conj:transfer}.}
The heuristic mechanism is the following. The operators $M_{vu}$ are local: $M_{vu}$ raises
the occupation at site $v$, lowers it at $u$, and is diagonal in all other site
occupations (Lemmas~\ref{lem:row}--\ref{lem:col}), so it identifies only the pair
$(u,v)$ of half-edges at its vertex and imposes occupation-preserving constraints
elsewhere. The projectors $P,Q$ impose \emph{global} constraints---$Q$ that all
occupied sites are distinct, $P$ that exactly one site is at occupation $d$---which
act as additional $\delta$-identifications among the indices. A constraint of the
form $\delta_{v_A,v_B}$ between two indices that lie in distinct free loops of the
bare gluing \emph{merges} those loops, reducing the loop count by one and removing
a power of $m$; in the cases we check it can never split a loop into two. We verify
\eqref{eq:appB_decorated} explicitly at leading order: for $|\pi|=1$ (a single
transposition, forced to $g=0$) the bare net power is $m^{-2}$, and the
decorations, acting through merges, do not raise it, so the leading off-diagonal
contribution is $O(m^{-2})$; for $|\pi|=2$ the two cases $g\in\{0,1\}$ give bare
net powers $m^{-4}$ and $m^{-2}$, again not raised by the decorations.

We do \emph{not}, however, establish the bound for all $\ell$, and we flag the
reason explicitly rather than assert the general case. The decorations do not enter
the index sum uniformly: the projectors $P,Q$ pin some index loops to the $O(n)$
occupied sites while others range freely over the $O(m)$ empty modes, so the
loop weights are heterogeneous in $n$ and $m$ rather than carrying a uniform power
of $m$. This heterogeneity is the same one visible already at $k=1$ in the variance
asymmetry $\lVert V_{\mathrm{row}}\rVert=O(n^2)$ versus
$\lVert V_{\mathrm{col}}\rVert=O(m)$ of Lemmas~\ref{lem:row}--\ref{lem:col}.
Whether the uniform genus ordering of \eqref{eq:appB_decorated} survives this
heterogeneity for all $\ell$---equivalently, whether the deterministic
Fock-trace factor $\Gamma$ and the Weingarten $\delta$-contractions can be
disentangled so that the net power counting is governed by genus alone---is not
settled by the leading-order check, and a complete analysis must track the $n/m$
scaling of $\Gamma$ separately from the combinatorial $\delta$-structure. We
therefore state \eqref{eq:appB_decorated} as a conjecture and present the remainder
of the appendix as conditional on it.

It is worth making explicit how this ratio is formed, since numerator and
denominator sum over different index sets. Each off-diagonal $W_{\sigma,\tau}$ of
Eq.~\eqref{eq:appB_haar} carries the same deterministic Fock-trace factor
$\Gamma$ as the diagonal term $W_{\sigma,\sigma}$ sharing its $\sigma$; because
$\Gamma$ factorizes over the cyclic trace and is common to both, the outer sum
over $\sigma$ cancels between $R$ and the Gaussian moment
$\mathbb E_{\mathrm{Gauss}}[\cdots]=m^{-k}\sum_{\sigma}W_{\sigma,\sigma}$ of
Eq.~\eqref{eq:appB_gauss}. The ratio $|R|/\mathbb E_{\mathrm{Gauss}}[\cdots]$ is
therefore governed by the inner sum over $\pi=\sigma^{-1}\tau\ne e$ alone, each
term suppressed relative to the diagonal by the loop deficit $m^{-2(\ell-g)}$ of
Eq.~\eqref{eq:appB_decorated} at distance $|\pi|=\ell$ and genus $g$.

Granting Conjecture~\ref{conj:transfer}, we bound $R$. Organize the sum by the
topological index $h=\ell-g\ge 1$, which controls the net suppression
$m^{-2h}$ directly. A contribution of index $h$ has, in the worst case, distance
$\ell=2h$ and genus $g=h$; its support involves at most $2\ell=4h$ of the $k$
active indices. Summing over the admissible support sizes $\ell<s\le2\ell$---the
maximal support $s=2\ell$, corresponding to $\ell$ disjoint transpositions,
dominating, each smaller-support cycle type down by at least a factor $k$ and
contributing an overall constant $C_0=O(1)$ (uniform in $\ell$ for $k\gg\ell$)
absorbed into $C$ below---the number of contributing permutations obeys
\begin{equation}
    \#\{\pi\in S_k:|\pi|=\ell\}\le C_0\binom{k}{2\ell}(2\ell-1)!!
    \;\le\; C_0\,\frac{k^{2\ell}}{2^{\ell}\,\ell!},
    \label{eq:appB_count}
\end{equation}
which for $\ell=2h$ gives at most $C_0\,k^{4h}/(2^{2h}(2h)!)$ configurations. Combining
the count, the net suppression $m^{-2h}$, the bounded Möbius and operator
factors ($|\mathrm{M\ddot ob}|\le 4^{|\pi|}$, $\gamma_{\mathfrak g}=O(1)$) and the
support-sum constant $C_0$, all absorbed
into an absolute constant $C$, and summing over $h$,
\begin{equation}
    \frac{|R|}{\mathbb E_{\mathrm{Gauss}}[\cdots]}
    \;\le\; \sum_{h\ge1}\frac{1}{(2h)!}\Big(\frac{C\,k^{4}}{m^{2}}\Big)^{h}
    \;=\; \cosh\!\Big(\sqrt{\tfrac{C k^4}{m^2}}\Big)-1
    \;=\; O\!\Big(\frac{k^4}{m^2}\Big),
    \label{eq:appB_sum}
\end{equation}
the series converging whenever $k^4/m^2\to0$, i.e.\ $m\gg k^2$. The leading term
is $h=1$: its dominant representative is $\ell=2$, $g=1$, of size $O(k^4/m^2)$,
which exceeds the $\ell=1$, $g=0$ term $O(k^2/m^2)$ once $k\gg1$. The net power is
even, $m^{-2h}$, so no $m^{-1}$ correction arises---the single transposition is
suppressed at $m^{-2}$, not $m^{-1}$.

\subsection{Uniformity in $k$ and conclusion}
\label{appB:conclusion}

At the operating order $k\sim n\log m$, the relative correction \eqref{eq:appB_sum}
is
\begin{equation}
    \frac{|R|}{\mathbb E_{\mathrm{Gauss}}[\cdots]}
    = O\!\Big(\frac{k^4}{m^2}\Big) = \tilde O\!\Big(\frac{n^4}{m^2}\Big),
    \label{eq:appB_rel}
\end{equation}
with convergence condition $m\gg k^2$, i.e.\ $m\gg n^2\log^2 m$, hence
$m=\tilde\Omega(n^2)$. We must check this correction does not accumulate as the
operator norm is extracted. The point is that the bound \eqref{eq:appB_sum} is
controlled by the topological index $h$, whose suppression $m^{-2h}$ is
independent of $k$: increasing $k$ lengthens the trace but does not change the
per-index-$h$ suppression, and the convergent series in $k^4/m^2$ has the same
$O(k^4/m^2)$ leading behaviour for every $k$ in the operating window. Thus
\begin{equation}
    \mathbb E_{\mathrm{Haar}}\,\mathrm{Tr}\big[(XX^\dagger)^k\big]
    = \mathbb E_{\mathrm{Gauss}}\,\mathrm{Tr}\big[(XX^\dagger)^k\big]
      \big(1+\tilde O(n^4/m^2)\big),
    \label{eq:appB_moment_final}
\end{equation}
uniformly in $k$. Substituting into the Markov bound \eqref{eq:appB_markov} at the
optimizing order \eqref{eq:appB_kopt} reproduces, for the Haar ensemble, the
Gaussian threshold $t=\tilde O(\sqrt n)$ of Section~\ref{subsec:concentration}
multiplied by $(1+\tilde O(n^4/m^2))^{1/2k}\to 1$, so taking the spectral norm,
\begin{equation}
    \lVert \mathcal A_{\mathrm{Haar}}\rVert
    \le \lVert \mathcal A_{\mathrm{Gauss}}\rVert\,\big(1+\tilde O(n^4/m^2)\big),
    \label{eq:appB_final}
\end{equation}
which is Lemma~\ref{lem:weingarten}, conditional on
Conjecture~\ref{conj:transfer}. In the operating regime $m=\tilde\Omega(n^3)$
the correction is $\tilde O(n^4/m^2)=\tilde O(n^{-2})\to0$, comfortably inside the
convergence threshold $m=\tilde\Omega(n^2)$ that the transfer itself requires.
Thus, \emph{if} Conjecture~\ref{conj:transfer} holds, the Haar-to-Gaussian
transfer is not the binding constraint on the mode-scaling law: the
$m=\tilde\Omega(n^3)$ requirement is set by the leakage--collision product of
Section~\ref{subsec:scaling}, and the transfer would then be satisfied with room
to spare. Establishing the conjecture for all $\ell$, and thereby making the Haar
statement unconditional, is left as the natural next step; the leading-order
verification above and the variance asymmetry of
Lemmas~\ref{lem:row}--\ref{lem:col} indicate both why the bound is plausible and
where the difficulty lies.

\section{Numerical Grid and Validation}
\label{app:grid}

This appendix records the complete parameter grid underlying
Fig.~\ref{fig:numerics}, the exact sector dimensions at each point, and the
validation of the sparse leakage-operator construction. All points use the
per-point deterministic seeding and master seed described in
Sec.~\ref{subsec:numerics}. Table~\ref{tab:validation} reports the agreement of
the sparse norm with a brute-force full-Fock-space computation;
Table~\ref{tab:grid} lists every $(d,n,m)$ measured, its collision-free and
single-bunch sector dimensions, the Paulsen-dilated operator dimension $D$, and
whether the point is dilute ($m\ge 2n^2$). The largest operator assembled has
$D\approx3.7\times10^{7}$ (at $d=3$, $n=8$, $m=30$); a dilute $d=2$ point at
large $n$ would instead carry $D\gtrsim10^{11}$ and is correspondingly out of
reach, which is why the large-$n$ $n$-scan points are taken sub-dilute and drawn
as open markers in Fig.~\ref{fig:numerics}.

\begin{table}
\caption{Validation of the sparse construction against a brute-force
diagonalization of the full fixed-$N$ Fock-space Hamiltonian projected onto the
$Q$ and $P$ sectors, at six small $(d,n,m)$ points. The sparse leakage-operator
norm $\lVert X\rVert$ and the brute-force $\lVert QH_{\mathrm{Lie}}P\rVert$ agree
to ten significant figures at every point.}
\label{tab:validation}
\begin{ruledtabular}
\begin{tabular}{cccccc}
$d$ & $n$ & $m$ & sparse $\lVert X\rVert$ & brute force & $|\Delta|$ \\
\colrule
2 & 3 & 4 & 1.8108197284 & 1.8108197284 & $<10^{-10}$ \\
2 & 4 & 5 & 1.8726488072 & 1.8726488072 & $<10^{-10}$ \\
3 & 5 & 5 & 2.7493249439 & 2.7493249439 & $<10^{-10}$ \\
3 & 4 & 6 & 2.6113134026 & 2.6113134026 & $<10^{-10}$ \\
2 & 5 & 6 & 3.1699708798 & 3.1699708798 & $<10^{-10}$ \\
3 & 6 & 5 & 3.7914826009 & 3.7914826009 & $<10^{-10}$ \\
\end{tabular}
\end{ruledtabular}
\end{table}

\begin{table*}
\caption{Complete numerical grid (QUICK-mode run, master seed \texttt{20260711},
five realizations per point for the four-ensemble points and three for the
Haar-only points). For each $(d,n,m)$ the table lists the coupling ensembles
measured (``all 4'' $=$ \{gauss iid, gauss Herm, Haar, Herm.\ $i\log U$\}; ``Haar''
$=$ Haar entries only), the collision-free sector dimension
$\dim\mathcal{H}_Q=\#\{n_k\le d-1,\ \sum_k n_k=n\}$, the single-bunch sector
dimension $\dim\mathcal{H}_P=m\cdot\#\{n_k\le d-1\ (k\ne u_0),\ \sum n_k=n-d\}$,
the leakage-operator (Paulsen-dilated) dimension $D=\dim\mathcal{H}_Q+\dim\mathcal{H}_P$,
and the dilution status (``yes'' if $m\ge 2n^2$, ``sub'' if sub-dilute). The
assembled sparse operator uses the reached portion of $\mathcal{H}_Q$, a subset of
the quoted $\dim\mathcal{H}_Q$ that leaves the norm unchanged.}
\label{tab:grid}
\begin{ruledtabular}
\begin{tabular}{cccccccc}
$d$ & $n$ & $m$ & ensembles & $\dim\mathcal{H}_Q$ & $\dim\mathcal{H}_P$ & $D$ & dilute? \\
\colrule
2 & 3 & 12 & all 4 & 220 & 132 & 352 & sub \\
2 & 3 & 24 & all 4 & 2,024 & 552 & 2,576 & yes \\
2 & 3 & 48 & all 4 & 17,296 & 2,256 & 19,552 & yes \\
2 & 3 & 96 & all 4 & 142,880 & 9,120 & 152,000 & yes \\
2 & 3 & 192 & all 4 & 1,161,280 & 36,672 & 1,197,952 & yes \\
2 & 3 & 384 & Haar & 9,363,584 & 147,072 & 9,510,656 & yes \\
2 & 4 & 16 & all 4 & 1,820 & 1,680 & 3,500 & sub \\
2 & 4 & 24 & all 4 & 10,626 & 6,072 & 16,698 & sub \\
2 & 4 & 32 & all 4 & 35,960 & 14,880 & 50,840 & yes \\
2 & 4 & 48 & all 4 & 194,580 & 51,888 & 246,468 & yes \\
2 & 4 & 64 & all 4 & 635,376 & 124,992 & 760,368 & yes \\
2 & 4 & 96 & Haar & 3,321,960 & 428,640 & 3,750,600 & yes \\
2 & 4 & 128 & Haar & 10,668,000 & 1,024,128 & 11,692,128 & yes \\
2 & 5 & 20 & all 4 & 15,504 & 19,380 & 34,884 & sub \\
2 & 5 & 30 & all 4 & 142,506 & 109,620 & 252,126 & sub \\
2 & 5 & 40 & all 4 & 658,008 & 365,560 & 1,023,568 & sub \\
2 & 5 & 50 & Haar & 2,118,760 & 921,200 & 3,039,960 & yes \\
2 & 5 & 60 & Haar & 5,461,512 & 1,950,540 & 7,412,052 & yes \\
2 & 6 & 24 & all 4 & 134,596 & 212,520 & 347,116 & sub \\
2 & 6 & 30 & Haar & 593,775 & 712,530 & 1,306,305 & sub \\
2 & 6 & 40 & Haar & 3,838,380 & 3,290,040 & 7,128,420 & sub \\
2 & 7 & 20 & all 4 & 77,520 & 232,560 & 310,080 & sub \\
2 & 8 & 18 & all 4 & 43,758 & 222,768 & 266,526 & sub \\
\colrule
3 & 4 & 40 & all 4 & 121,810 & 1,560 & 123,370 & yes \\
3 & 4 & 60 & Haar & 592,065 & 3,540 & 595,605 & yes \\
3 & 5 & 30 & all 4 & 264,306 & 13,050 & 277,356 & sub \\
3 & 5 & 60 & Haar & 7,514,712 & 106,200 & 7,620,912 & yes \\
3 & 6 & 24 & all 4 & 412,896 & 54,648 & 467,544 & sub \\
3 & 6 & 48 & Haar & 22,017,808 & 882,096 & 22,899,904 & sub \\
3 & 7 & 20 & all 4 & 484,500 & 139,080 & 623,580 & sub \\
3 & 7 & 36 & Haar & 24,039,972 & 2,613,240 & 26,653,212 & sub \\
3 & 8 & 18 & all 4 & 633,726 & 319,464 & 953,190 & sub \\
3 & 8 & 30 & Haar & 30,462,615 & 6,741,630 & 37,204,245 & sub \\
\colrule
4 & 5 & 24 & Haar & 97,704 & 552 & 98,256 & sub \\
4 & 6 & 20 & Haar & 172,900 & 3,800 & 176,700 & sub \\
4 & 7 & 18 & Haar & 325,584 & 17,442 & 343,026 & sub \\
4 & 8 & 16 & Haar & 428,418 & 48,720 & 477,138 & sub \\
\colrule
5 & 6 & 20 & Haar & 176,700 & 380 & 177,080 & sub \\
5 & 7 & 18 & Haar & 343,026 & 2,754 & 345,780 & sub \\
5 & 8 & 16 & Haar & 477,258 & 10,880 & 488,138 & sub \\
\end{tabular}
\end{ruledtabular}
\end{table*}

\bibliography{sample.bib}

\end{document}